\DeclareMathAlphabet{\mathpzc}{OT1}{pzc}{m}{it}
\newtheorem{observation}{Observation}
\begin{document}

\title{Defeating Super-Reactive Jammers With Deception Strategy: Modeling, Signal Detection, and Performance Analysis}

\author{ Nguyen Van Huynh, Diep N. Nguyen, Dinh Thai Hoang, Thang X. Vu,\\
	Eryk Dutkiewicz, and Symeon Chatzinotas
	\thanks{Nguyen Van Huynh, Diep N. Nguyen, Dinh Thai Hoang, and Eryk Dutkiewicz are with the School of Electrical and Data Engineering, University of Technology Sydney, NSW 2007, Australia (e-mail: huynh.nguyenvan@student.uts.edu.au; diep.nguyen@uts.edu.au; hoang.dinh@uts.edu.au; eryk.dutkiewicz@uts.edu.au).}
	\thanks{Thang X. Vu and Symeon Chatzinotas are with the Interdisciplinary Centre for Security, Reliability and Trust (SnT), University of Luxembourg, L-1855 Luxembourg (e-mail: thang.vu@uni.lu; symeon.chatzinotas@uni.lu).}
	\thanks{Preliminary results in this paper were presented at the IEEE WCNC Conference, 2020~\cite{WCNC}.}
}
\maketitle
\thispagestyle{empty}
\begin{abstract}

This paper develops a novel framework to defeat a super-reactive jammer, one of the most difficult jamming attacks to deal with in practice. Specifically, the jammer has an unlimited power budget and is equipped with the self-interference suppression capability to simultaneously attack and listen to the transmitter's activities. Consequently, dealing with super-reactive jammers is very challenging. Thus, we introduce a smart deception mechanism to attract the jammer to continuously attack the channel and then leverage jamming signals to transmit data based on the ambient backscatter communication technology. To detect the backscattered signals, the maximum likelihood detector can be adopted. However, this method is notorious for its high computational complexity and requires the model of the current propagation environment as well as channel state information. Hence, we propose a deep learning-based detector that can dynamically adapt to any channels and noise distributions. With a Long Short-Term Memory network, our detector can learn the received signals' dependencies to achieve a performance close to that of the optimal maximum likelihood detector. Through simulation and theoretical results, we demonstrate that with our approaches, the more power the jammer uses to attack the channel, the better bit error rate performance the transmitter can achieve.

\end{abstract}

\begin{IEEEkeywords}
Anti-jamming, ambient backscatter communications, signal detection, reactive jammer, deep learning, LSTM, and physical layer security.
\end{IEEEkeywords}

\section{Introduction}
\label{Sec:intro}

Wireless communications play an essential role in many areas by facilitating tetherless and ubiquitous transmissions through the broadcast medium. However, due to the exposed nature of wireless communications, current wireless networks are extremely vulnerable to jamming attacks. In particular, the jammer intentionally injects high-power interference signals to the target wireless channels to significantly reduce the signal-to-interference-plus-noise ratio (SINR) at the legitimate receiver, and thus it can disrupt or even bring down the whole system. Among all types of jamming attacks, the super-reactive jamming attack is one of the most difficult jamming attacks to deal with in practice. In particular, the super-reactive jammer has unlimited power budget and is equipped with the self-interference suppression (SiS) capability~\cite{7524449} ~\cite{Afifi2013Exploiting},~\cite{Hanawal2020Cognitive}allowing it to simultaneously attack and listen to the transmitter's activities to adjust its attack strategy. Consequently, all current anti-jamming approaches such as frequency hopping~\cite{Xiao2012Jamming, Gao2018Game}, rate adaption~\cite{Noubir2011RA}, and deception~\cite{Hoang2015Performance},~\cite{Hoang2020Borrowing} cannot effectively defeat such an attack since with virtually unlimited power, the super-reactive jammer can theoretically jam all channels over which it discerns activity of legitimate users (thanks to the jammer's SiS capability).

In this paper, we introduce a novel anti-jamming framework to effectively defeat super-reactive jamming attacks. This framework is composed of three main components: (i) intelligent deception strategy, (ii) smart multi-source backscatter communications, and (iii) a smart detection mechanism. Specifically, when a legitimate transmitter detects a reactive jamming attack on its communications channel, it will pretend to be oblivious and continue the data transmission to encourage the jammer to continue its attack. Then, the transmitter changes the communication method by activating the backscatter communications mode to transmit data instead of using the current active transmission mode. The ambient backscatter communication technology, unlike conventional wireless communications, can work well under the presence of radio jamming interference as it can reflect or absorb the jamming signals to backscatter information to the receiver~\cite{Huynh2018Survey,Huynh2019Jam}. To improve the backscattering performance, the receiver can be equipped with multiple antennas~\cite{Parks2014Turbo}. To be more specific, the transmitter will use an auxiliary low-power tag device to backscatter its information simultaneously on its own transmitted signals \emph{and} the jamming signals, referred to as multi-source backscattering communications. As such, this solution allows the system to communicate under super-reactive jamming attacks. Finally, to enhance the detection efficiency at the receiver, we develop a novel deep learning (DL) based detection mechanism. It is widely known that the maximum likelihood (ML) detection method is optimal in terms of bit error rate (BER). However, the ML detector requires the model of the current propagation environment as well as channel state information~\cite{Fan2019CNN}. As such, it cannot be directly applied to other scenarios with different channel and noise distributions (e.g., when the jammer adjusts its attack strategy or other anti-jamming systems in different wireless environments). Moreover, it is very difficult to obtain the channel coefficients of the links from the jammer as the jammer will not send a known training sequence. Instead, our proposed DL-based detector does not require this information, and thus it is robust the wireless environment and jammer's attack strategies. As a result, the proposed DL detector can be adopted in various anti-jamming systems with different channel distributions as well as different jamming attack strategies.

Through simulation results and theoretical analysis, we demonstrate that, under specific assumptions, the proposed framework is able to not only successfully defeat the super-reactive jammer, but also to leverage the jamming signals to enhance the BER performance of the system. In particular, we first show that with our proposed framework, the backscatter rate and BER performance increases with the jamming power of the super-reactive jammer. As such, the proposed framework can defeat the super-reactive jamming attacks. Furthermore, the proposed DL-based detector can achieve the BER performance close to that of the optimal ML detector without requiring information of the jammer in advance (e.g., jamming signals' characteristics and channel coefficients). Finally, through simulation results, it it revealed that the system performance can be significantly improved if we use multiple antennas at the legitimate receiver.

The rest of this paper is organized as follows. In Section~\ref{sec:relatedwork}, we give an overview of related work in anti-jamming and signal detection. Section~\ref{Sec.System} and Section~\ref{Sec.signalmodel} present the system model and the channel model, respectively. Section~\ref{sec:decoding} discusses the decoding process at the receiver. Section~\ref{sec:deeplearning} introduces our proposed DL-based signal detector. After that, the evaluation results are discussed in Section~\ref{sec:evaluation}. Finally, conclusions are drawn in Section~\ref{sec:conclusion}.

\section{Related Work and Main Contributions}
\label{sec:relatedwork}

\subsection{Anti-Jamming Approaches}

Several anti-jamming approaches have been proposed in the literature~\cite{Mpitziopoulos2009Survey}. Although these approaches are feasible in coping with conventional jamming attacks, e.g., proactive jamming, they are not effective or even impractical in defeating super-reactive jamming attacks. In this subsection, we first highlight current anti-jamming techniques and their drawbacks in coping with the super-reactive jamming.

\subsubsection{Rate Adaptation}

One common anti-jamming solution is the rate adaptation (RA) technique~\cite{Noubir2011RA}. This technique allows the transmitter to adapt its transmission rate under jamming attacks. For example, when the jammer attacks the channel at a high-power level, the transmitter can reduce its transmission rate to hide itself from the jammer. In contrast, the transmitter can increase its transmission rate when the jamming power is low. However, in~\cite{Noubir2011RA}, the authors demonstrated that the RA technology is not effective in dealing with smart jamming like reactive jamming, especially when the jamming power is very high. Additionally, the authors pointed out that in some cases, the transmitter fails to recover from the lowest data rate even when the jammer ceases its attacks. For the super-reactive jammer, the RA technique is not effective as the jammer can always attack the system with high jamming power. More importantly, by simultaneously listening to activities of the transmitter while jamming by using recent advances in signal detection and sensing, e.g., self-interference suppression technology~\cite{Afifi2013Exploiting},~\cite{Hanawal2020Cognitive}, the jammer can quickly adapt its attack power as soon as the transmitter changes its transmission rate. Consequently, the RA technology is not suitable for dealing with super-reactive jammers.

\subsubsection{Frequency Hopping}

Another anti-jamming solution widely adopted in existing studies is the frequency hopping (FH) technology~\cite{Xiao2012Jamming, Gao2018Game}. The key idea of this solution is that the transmitter can switch to another channel when the jammer attacks the current channel. However, these solutions and other FH approaches in the literature inherit various drawbacks in defeating jamming attacks, especially under reactive jamming. First, the FH technology requires a pre-shared key at both the transmitter and the receiver, resulting in the problem of scalability due to the need of distributing the shared key~\cite{Xiao2012Jamming}. Second, under the reactive jamming, when the transmitter switches to another communication channel, the reactive jammer can quickly detect and attack the new channel. Finally, if the jammer's power budget is very high, e.g., the super-reactive jammer considered in this paper, it can attack all the channels at the same time, and thus the FH technology is no longer effective.

\subsubsection{Recent Proposals}

Recently, several solutions have been proposed to overcome the limitations of conventional anti-jamming mechanisms. Specifically, in~\cite{Gu02017Exploiting}, when the jammer attacks the channel, the transmitter (equipped with the radio frequency (RF) energy harvesting capability) can harvest RF energy from the jamming signals. Then, the harvested energy is used to support the internal operations and active transmissions when the jammer is idle. Differently, the authors in~\cite{Huynh2019Jam} proposed to use the ambient backscatter communications~\cite{Huynh2018Survey},~\cite{Liu2013Ambient} to allow the transmitter to transmit information under jamming attacks by backscattering the jamming signals. The authors demonstrated that by leveraging the jamming signals, the average throughput of the transmitter increases with the jamming power. However, all these solutions are only suitable for dealing with proactive jammers, i.e., the jammers attack the channel regardless of the transmitter's activities. In contrast, the reactive jammer can quickly cease its attacks when there is no active transmission from the transmitter. As such, the performance of these solutions in dealing with the super-reactive jammer is very poor.

Several solutions have been proposed recently to deal with reactive jammers. In~\cite{Hoang2015Performance}, the authors proposed a deception tactic to deal with jamming attacks in a cognitive radio communication system. The key idea of this solution is that the secondary user can generate fake signals to lure/trap the jammer to attack the channel, and thus it can undermine the attack abilities by draining the power of the jammer. As the authors considered scenarios in which the jammer has a limited power budget, this solution works well because the jammer does not have enough power to attack when the secondary user transmits the actual information. However, if the jammer's power budget is high (e.g., the super-reactive jammer considered in this paper), this solution is no longer efficient. In~\cite{Hoang2020Borrowing}, the authors proposed a deception-based anti-jamming framework for IoT networks. In particular, the transmitter can decide either to perform deception or actively transmit data at the beginning of the time slot. If the transmitter chooses to perform deception, it will first generate ``fake'' signals to lure the jammer. After that, as soon as the jammer attacks the channel, the transmitter can harvest RF energy from the jamming signals or backscatter information to the receiver through the jamming signals. Nevertheless, for reactive jammers equipped with the self-interference suppression technology~\cite{Hanawal2020Cognitive}, this solution is not effective. The reason is that the reactive jammer can simultaneously attack the channel and listen to the activities of the transmitter. As soon as the transmitter ceases its active transmissions, the jammer will stop its jamming attacks, and thus there is no jamming signals for the transmitter to leverage, i.e., harvest energy and backscatter information. To the best of our knowledge, all existing anti-jamming solutions cannot effectively deal with reactive jammers, especially the super-reactive jammer which has self-interference suppression ability and is equipped with an unlimited power budget.

\subsection{Deep Learning for Signal Detection}

There are some research works using deep learning for signal detection~\cite{Fan2019CNN, Ye2018Power, Shea2018Over, Rajendran2018Deep}. In~\cite{Ye2018Power}, the authors proposed a deep neural network to estimate the wireless channel and detect signals in orthogonal frequency-division multiplexing systems. In~\cite{Shea2018Over}, the authors focused on signal detection for radio communication signals by using a one-dimension convolutional neural network architecture. Nevertheless, to the best of our knowledge, there is no work considering deep learning for signal detection in ambient backscatter communication systems under jamming attacks. Detecting backscattered bits in our work is much more challenging compared with conventional active transmissions due to the following reasons. Firstly, the backscattered signals are very weak with low data rates compared to traditional active transmissions. As such, it is difficult for the receiver to detect such weak signals with high accuracy (usually higher than 99\% for reliable communications). Secondly, the received signals at the receiver are a combination of the backscattered signals from the backscatter tag and the direct signals from the jammer as well as the transmitter. Consequently, it is very challenging for the receiver to distinguish the backscattered signals from jamming signals and the transmitter's signals. To address this problem, in this work we propose to adopt the LSTM network, which is designed to learn the sequence and time-series data, to learn the long-term dependencies among the received signals at different antennas over several RF source symbols for each backscattered bit. By doing this, the receiver can better estimate the weak backscattered signals because the LSTM network can store information learned from previous input data (i.e., received signals) and combine them to achieve better estimation performance. Moreover, the proposed DL-based signal detector can be straightforwardly applied to other scenarios, e.g., when the reactive jammer changes its attack strategy or other anti-jamming systems in different wireless environments. To the best of our knowledge, the proposed detector is the first solution using deep learning in signal detection for ambient backscatter communications in which the backscattered signals are usually weak.

\subsection{The Novelty and Main Contributions}

The major contributions of this paper can be summarized as follows.

\begin{itemize}
	
	\item We propose a novel anti-jamming framework to defeat a super-reactive jammer that has unlimited power budget and is equipped with the self-interference suppression capability to simultaneously attack and listen to the transmitter's activities. Specifically, when the transmitter detects jamming attacks on the channel, it will pretend to be oblivious and continue to transmit data to lure/trap the jammer. Then, the transmitter switches to the backscatter communication mode by using the ambient backscatter tag to send the actual information by simultaneously backscattering the jammer's signals and the transmitter's signals. As such, the transmitter can communicate with the receiver even under very strong jamming attacks. To the best of our knowledge, this is the first anti-jamming framework that can efficiently deal with super-reactive jamming attacks.
	
	\item We present the signal models of direct links (from the jammer and the transmitter) and the backscattered link (from the tag). Given the fact that a closed-form expression of the capacity of the backscatter channel remains unknown, we also develop a numerical method to quantify the maximum achievable backscatter rate of the ambient backscatter tag. 
	
	\item We propose a novel DL-based signal detector by using the LSTM network at the backscatter receiver. The proposed DL-based signal detector does not require the channel state information and the model of the current propagation environment to detect the backscattered bits as in the conventional maximum likelihood detector. As such, our proposed DL-based signal detector can be straightforwardly applied to other scenarios, e.g., when the reactive jammer changes its attack strategy or other anti-jamming systems in different wireless environments. To the best of our knowledge, the proposed detector is the first solution using deep learning in signal detection for ambient backscatter communications.
	
	\item Finally, we perform extensive simulations to demonstrate the efficiency of proposed solutions. In particular, the proposed framework can allow the transmitter to transmit data even under very strong and super-reactive jamming attacks. Importantly, by using our proposed solution, the BER performance of the system can be enhanced as the attack power of the jammer increases.
	
\end{itemize}

\underline{\textit{Notation:}} The notations used in this paper are listed in the following. Lowercase letter $g$ denotes a scalar variable. Boldface lowercase letter $\mathbf{g}$ denotes a vector. Boldface uppercase letter $\mathbf{G}$ denotes a matrix. $\mathcal{CN}(\mu, \xi^2)$ is the circularly symmetric complex Gaussian (CSCG) distribution with variance $\xi^2$ and mean $\mu$. $\mathbf{I}_M$ is the $M \times M$ identical matrix. $\mathbf{G}^\mathsf{T}$ denotes the transpose of matrix $\mathbf{G}$. $\mathbf{G}^\text{H}$ is the conjugate transpose of matrix $\mathbf{G}$. $\mathbb{E}[.]$ is the statistical expectation. $\odot$ denotes the Hadamard product, i.e., element-wise multiplication of vectors. $I(X;Y)$ denotes the mutual information of random variables $X$ and $Y$. $H(X|Y)$ is the conditional entropy. $\oplus$ is the addition modulo-2 operator. All subscripts with italic fonts represent indicators such as iteration and time variables. Superscripts $\mathrm{t}$, $\mathrm{j}$, $\mathrm{r}$, and $\mathrm{b}$ with straight fonts stand for the transmitter, the jammer, the receiver, and the backscatter tag, respectively.

\section{Overview of The Proposed Anti-jamming System}
\label{Sec.System}
\begin{figure}[!]
	\centering
	\includegraphics[scale=0.27]{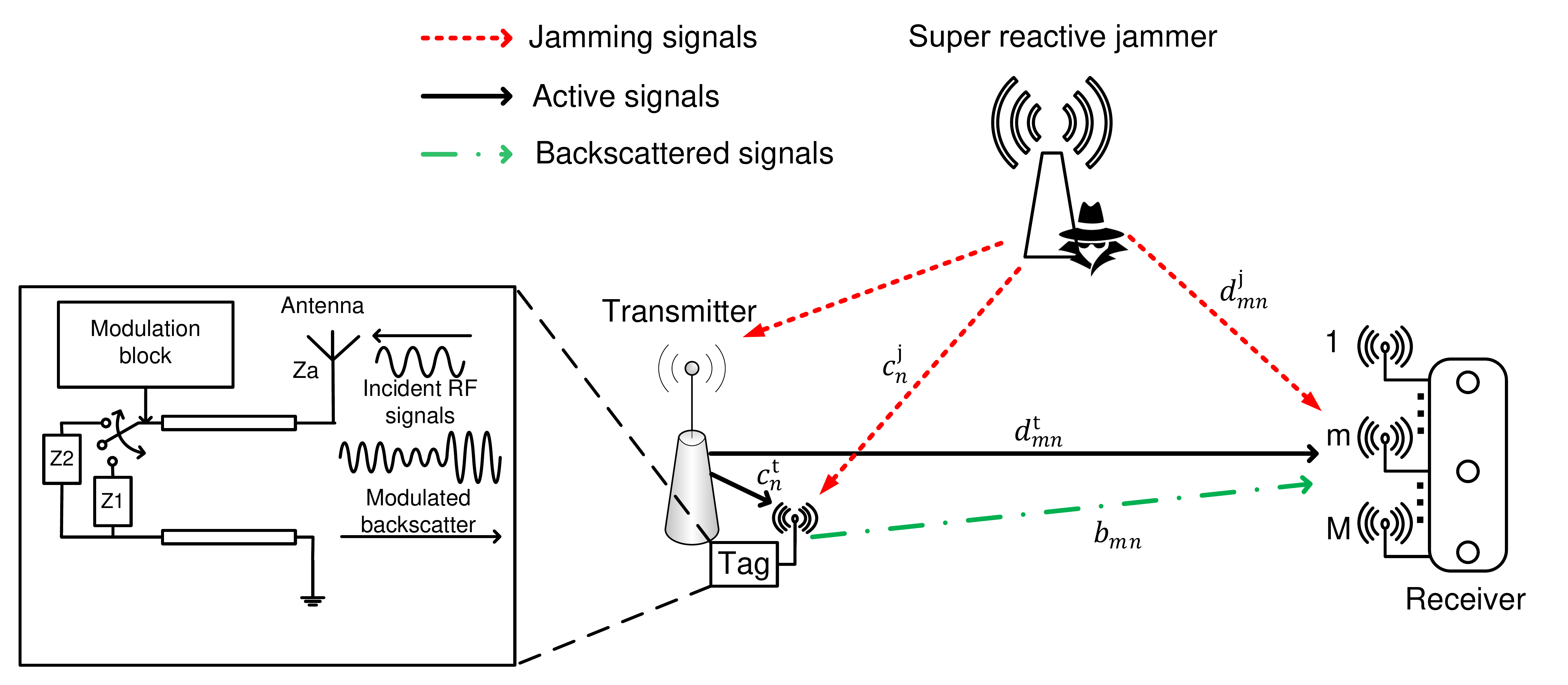}
	\caption{System model.}
	\label{Fig.system_model}
\end{figure}

In this paper, a legitimate wireless communication system including a transmitter and a receiver is considered, as illustrated in Fig.~\ref{Fig.system_model}. In this paper, we assume that the locations of transmitter, the receiver, and the jammer are fixed. The transmitter transmits data to the receiver on a dedicated channel. We assume that the receiver has more energy and computational capacity than the transmitter. A super-reactive jammer, which is located near the legitimate system, aims to disrupt/bring down the communications between the transmitter and the receiver by injecting strong interference signal to the dedicated channel. Note that this work aims to defeat a super-reactive jammer with unlimited power budget. Thus, we assume that once the jammer attacks the channel, it will attack with the highest power level. Consequently, all transmitted packets by active transmissions will be lost.

\subsection{Jammer Model}

We consider a super-reactive jammer that is equipped with an unlimited power budget and can simultaneously attack the channel and listen to the activities of the transmitter thanks to the development of the self-interference suppression technique~\cite{Hanawal2020Cognitive,Afifi2013Exploiting}. As such, the super-reactive jammer is able to detect the status of the transmitter to cease the jamming attacks when the transmitter stops transmitting on the channel or quickly launch jamming attacks when the transmitter actively transmits the very first bits on the channel. Given such type of jamming attacks, all current anti-jamming approaches might not be effective to implement. For example, deception mechanisms proposed in~\cite{Hoang2015Performance},~\cite{Hoang2020Borrowing},~\cite{Huynh2020Deep} are no longer effective because the reactive jammer can quickly stop its attacks when the transmitter finishes the deception period, and thus there are no jamming signals for the transmitter to leverage. Moreover, with the unlimited power budget, the super-reactive jammer can simultaneously inject high-power interference to multiple channels. Hence, conventional anti-jamming approaches like rate adaptation and frequency hopping technologies~\cite{Noubir2011RA, Xiao2012Jamming} also cannot effectively defeat super-reactive jamming attacks, as discussed in the previous section. To the best of our knowledge, the super-reactive jammer considered in this work is the most difficult jamming attacks to deal with in practice, and our proposed framework is the first work in the literature which can deal with such kind of attacks. In this paper, we assume that the reactive jammer is equipped with a single antenna. In case the reactive jammer would be equipped with multiple antennas, it would be more challenging. This can be considered as a potential future research direction from our current work.

\subsection{Transmitter Design}

To defeat the reactive jammer, at the transmitter's side, we implement an auxiliary low-power ambient backscatter tag equipped with ambient backscatter communication technology, as illustrated in Fig.~\ref{Fig.system_model}. In particular, if the dedicated channel is under jamming attacks, the transmitter keeps its transmissions to attract the reactive jammer\footnote{Note that the transmitter considered in this paper is equipped with a single antenna. However, our proposed approach can be directly applied to other scenarios in which the transmitter is equipped with multiple antennas.}. Concurrently, the transmitter switches the transmission mode by using the ambient backscatter tag. Then, the tag uses the ambient backscatter communication technology to send the actual data by backscattering the transmitter's signals and the jammer's signals. In other words, the jammer and the transmitter are considered as two ambient RF sources that are leveraged to backscatter information to the receiver~\cite{Yang2017Riding}. In practice, the jamming signal is usually much stronger than unintentional interference from surrounding wireless devices, especially for the super-reactive jammer considered in this work. The reason is that the jammer aims to send the strong jamming signals to the channel to reduce the signal-to-noise ratio at the receiver. Thus, we can easily detect the jamming attack by measuring the signals on the channel and/or based on the feedback information from the receiver. Another method to detect jamming attacks is based on the packet delivery ratio (PDR) which is the ratio between the number of packets that has been successfully delivered and acknowledged by the receiving node to the number of packets sent by the transmitting node~\cite{Hanawal2016Jamming}. Under a jamming attack, the PDR will be significantly degraded, and thus the transmitter can detect the jamming attacks. Note that even in the case of miss-detection, our proposed solution still can work well as the tag can backscatter information through the transmitter's signals.

To be more specific, the ambient backscatter tag modulates the information sent from the transmitter and then backscatters the transmitter's signals and the jammer's signals to send the information to the receiver. To that end, the ambient backscatter tag is equipped with an RF switch, e.g., ADG902~\cite{Huynh2018Survey}, that is directly connected to the antenna of the tag, as illustrated in Fig.~\ref{Fig.system_model}. The bit stream sent from the transmitter is first encoded by using a differential encoding technique, i.e., modulo-2. When the tag wants to transmit bits ``0'', it will switch to load $Z_1$ to change to the non-reflecting state (also known as the absorbing state). In contrast, the tag switches to load $Z_2$ to change to the reflecting state to transmit bits ``1''. In this way, the information can be backscattered to the receiver by just switching between the non-reflecting and reflecting states. Thus, unlike conventional wireless communications, the ambient backscatter communication technology can work well under radio interference/jamming as it can absorb or reflect the jamming signals to send data to the receiver. This is a fundamental principle of backscatter communication systems, and interested readers can find more information from the following references~\cite{Huynh2018Survey},~\cite{Liu2013Ambient}. It is worth noting that the ambient backscatter tag conveys information to the receiver by reflecting/observing the ambient RF signals instead of generating active RF signals like the conventional active transmission. As a result, the ambient backscatter communication technology can work well under the presence of multiple RF signals in the same environment~\cite{Yang2017Riding}. In this way, in our proposed framework, the tag device can leverage the RF signals from the transmitter and jammer to transmit data to the receiver. The process of backscattering and detecting information under super-reactive jamming attacks will be presented in details in Section~\ref{Sec.signalmodel} and Section~\ref{sec:decoding}.

It is worth noting that backscatter communications may not always outperform active transmissions with on-off keying (OOK). In particular, when the jammer attacks the channel with low power levels and the transmitter transmits information with high-power signals, the receiver still can detect the transmitted information as it can easily distinguish between the jamming signals and the active signals from the transmitter. Instead, if the tag backscatters information through the jamming signals and the active signals, the backscatter rate will be lower than the active transmission rate for the following reasons. Firstly, to be able to detect the backscattered signals at the receiver, the tag must backscatter information at a lower rate compared to that of the active transmissions. Secondly, with weak jamming signals, the tag may not be able to backscatter information at reasonable backscatter rates. In contrast, if the jammer attacks the channel with high power levels, the proposed backscatter solution is more beneficial than active transmissions. Specifically, with high jamming power, the receiver may not be able to detect the active signals, even if the transmitter transmits high-power signals. Note that, the active transmissions may be successful if the transmitter transmits information with a much higher power level compared to that of the jammer. However, this is usually not feasible in practice. Moreover, with strong jamming signals, the tag can backscatter information at a high rate, resulting in a good performance compared to that of normal active transmissions.

\subsection{Receiver Design}

As mentioned, the received signals contain the backscattered signals and the direct link signals from the transmitter and the jammer, resulting in a low signal detection accuracy. To address this problem, we employ $M$ antennas\footnote{Our proposed approach still works well even the receiver equipped with a single antenna. Results for this case can be found in the simulation section.} at the receiver. As such, the receiver can exploit diversity gain to improve the BER performance. More importantly, we propose a deep learning-based signal detector that can apply to any channel and noise distribution and achieve the BER performance close to that of the optimal ML detector. The process of decoding the backscattered signals will be presented in Section~\ref{sec:decoding} and Section~\ref{sec:deeplearning} in details. It is worth noting that the receiver can also detect the jamming attacks from the jammer and switch its reception mode to receive and detect the backscattered signals. Moreover, the proposed model and analysis in this paper can be straightforwardly adopted in other scenarios with multiple backscatter tags. To that end, the receiver can deploy popular scheduling mechanisms to eliminate the collisions between tags. Another potential solution is backscattering data at different backscatter rates~\cite{Huynh2018Survey},~\cite{Liu2013Ambient}. In this way, the receiver can easily distinguish the backscattered signals from different tags. Note that when the jamming signals and the transmitter's signals are strong, it is more difficult for the receiver to detect signals. To solve this problem, a two-stage hybrid analog-digital design proposed in~\cite{Duan2019Hybrid, Duan2020Ambient} can be adopted to eliminate the effect of the large dynamic range on detecting signals. More importantly, our proposed approach can also directly apply to scenarios with multiple jammers. In particular, these jammers are also considered as ambient RF sources to support the backscatter communications of the tag with the same signal model and analysis of the jammer considered in this paper. One example of backscattering from multiple sources and signal patterns can be found in~\cite{Yang2017Riding}.

\section{Channel Model}
\label{Sec.signalmodel}
In this section, we provide details on the channel model and analyze the received signals that will be used to develop the proposed detectors. In particular, to make favorable conditions for the receiver to successfully detect the backscattered signals from the tag, the backscatter rate should be lower than the symbol rate of ambient RF source signals~\cite{Huynh2018Survey},~\cite{Liu2013Ambient}. As such, in this work, we assume that the ambient backscatter tag backscatters data at the rate $N$ times lower than the symbol rate of the ambient RF signals~\cite{Guo2019Nocoherent,Zhang2019Constellation}. In other words, each backscatter symbol will be backscattered over $N$ RF source symbols. Note that, we do not use any channel coding technique in this work for simplicity. In the system under consideration, the received signals at the receiver include the direct link signals from the transmitter and the jammer, the backscattered signals from the tag, and noise. Denote $y_{mn}$ as the $n$-th received sample at the $m$-th antenna of the receiver. We have
\begin{equation}
\label{eq:totalreceive}
y_{mn} = \underbrace{d_{mn}^{\mathrm{t}} + d_{mn}^{\mathrm{j}}}_{\text{direct links}} + \underbrace{b_{mn}}_{\text{backscatter links}} + \sigma_{mn},
\end{equation}
where $d_{mn}^{\mathrm{t}}$ and $d_{mn}^{\mathrm{j}}$ are the direct link signal from the transmitter and the jammer, respectively. $b_{mn}$ is the backscattered signal from the backscatter tag and $\sigma_{mn}$ is the CSCG noise with zero mean and unit variance, i.e., $\sigma_{mn} \sim \mathcal{CN}(0,1)$.

\subsection{Direct Links}

Denote $s_n^\mathrm{j}$ and $s_n^\mathrm{t}$ as the signals generated from the jammer and the transmitter at time instant $n$, respectively. Similar to other research works in the literature~\cite{Guo2019Nocoherent},~\cite{Zhang2019Constellation}, $s_n^\mathrm{j}$ and $s_n^\mathrm{t}$ are assumed to be independent and identically distributed (i.i.d) at every time instant $n$. Note that in this paper, the tag considers the transmitter as an ambient RF source to support its backscatter communications. We assume that the RF signals generated from the transmitter during the jamming attacks are unknown at the receiver and are random (e.g., the transmitter actively transmits random/blank signals to attract the jammer and save energy). Thus, we assume that $s_n^\mathrm{t}$ follows the standard CSCG distribution, i.e., $s_n^\mathrm{t} \sim \mathcal{CN}(0,1)$~\cite{Qian2017Nocoherent}. Moreover, similar to other studies~\cite{Xing2015To}-\cite{Pirzadeh2016Subverting}, we assume that the jamming signals $s_n^\mathrm{j}$ follow the standard CSCG distribution as the jamming signals are also usually random, i.e., $s_n^\mathrm{j} \sim \mathcal{CN}(0,1)$.\footnote{Note that our proposed deep learning-based detector does not require this information in advance. The proposed deep learning-based detector can be directly applied for any other channel distributions.}

Then, the direct link signals from the jammer received at the $m$-th antenna of the receiver are expressed as follows~\cite{Guo2019Nocoherent,Zhang2019Constellation}:
\begin{equation}
\label{eq:directjammer}
d_{mn}^{\mathrm{j}} = f_m^\mathrm{j} \sqrt{P^\mathrm{jr}}s_n^\mathrm{j},
\end{equation}
where $f_m^\mathrm{j}$ denotes the Rayleigh fading~\cite{Zhang2019Constellation} from the jammer to the receiver. Without loss of generality, let $\mathbb{E}[|f_m^\mathrm{j}|^2] = 1$. $P^\mathrm{jr}$ presents the average received power of the direct link from the jammer to the receiver. $P^\mathrm{jr}$ can be obtained as follows~\cite{Rappaport2002Wireless}:
\begin{equation}
P^\mathrm{jr} = \frac{ \kappa P^\mathrm{j} G^\mathrm{j} G^\mathrm{r} }{{(L^\mathrm{jr})}^{\upsilon^\mathrm{jr}}},
\end{equation}
where $P^\mathrm{j}$ is the transmit power of the jammer, $G^\mathrm{j}$ is the antenna gain of the jammer, $G^\mathrm{r}$ is the antenna gain of the receiver, and $L^\mathrm{jr}$ is the distance between the receiver and the jammer. $\upsilon^\mathrm{jr}$ is the path loss exponent and $\kappa = {(\frac{\lambda}{4\pi})}^2$ with $\lambda$ is the wavelength.

In the same way, the direct link signals from the transmitter received at the $m$-th antenna of the receiver are also expressed as follows:
\begin{equation}
\label{eq:directtransmitter}
d_{mn}^{\mathrm{t}} = f_m^\mathrm{t} \sqrt{P^\mathrm{tr}}s_n^\mathrm{t},
\end{equation}
where $f_m^\mathrm{t}$ denotes the Rayleigh fading~\cite{Zhang2019Constellation} from the transmitter to the receiver. Without loss of generality, let $\mathbb{E}[|f_m^\mathrm{t} |^2] = 1$. $P^\mathrm{tr}$ is the average received power of the direct link from the transmitter to the receiver. $P^\mathrm{tr}$ can be calculated as follows:
\begin{equation}
P^\mathrm{tr} = \frac{ \kappa P^\mathrm{t} G^\mathrm{t} G^\mathrm{r}}{ {(L^\mathrm{tr})}^{\upsilon^\mathrm{tr}}},
\end{equation}
where $P^\mathrm{t}$ presents the transmit power of the transmitter, $L^\mathrm{tr}$ denotes the distance between the receiver and the transmitter, $G^\mathrm{t}$ is the antenna gain of the transmitter, and $\upsilon^\mathrm{tr}$ is the path loss exponent.

\subsection{Backscatter Link}

The ambient backscatter tag can backscatter the RF signals from both the reactive jammer and the transmitter to send data to the receiver by using the ambient backscatter communication technology~\cite{Huynh2018Survey,Liu2013Ambient}. In the following, we formulate the backscattered signals received at the receiver in details. Specifically, the jamming signals received at the backscatter tag are expressed as follows:
\begin{equation}
\label{eq:backjammer}
c_n^\mathrm{j} = g^\mathrm{j} \sqrt{P^\mathrm{jb}}s_n^\mathrm{j},
\end{equation}
where $g^\mathrm{j}$ denotes the Rayleigh fading~\cite{Zhang2019Constellation} from the jammer to the backscatter tag. Without loss of generality, let $\mathbb{E}[|g^\mathrm{j}|^2] = 1$. $P^\mathrm{jb}$ is the average jamming power received at the backscatter tag. $P^\mathrm{jb}$ is calculated as follows:
\begin{equation}
P^\mathrm{jb} = \frac{\kappa P^\mathrm{j} G^\mathrm{j} G^\mathrm{b}}{{(L^\mathrm{jb})}^{\upsilon^\mathrm{jb}}},
\end{equation}
where $L^\mathrm{jb}$ denotes the distance between the backscatter tag and the jammer, $\upsilon^\mathrm{jb}$ is the path loss exponent, and $G^\mathrm{b}$ is the tag's antenna gain. Similarly, the transmitter's signals received at the backscatter tag are derived as follows:
\begin{equation}
\label{eq:backtransmitter}
c_n^\mathrm{t} = g^\mathrm{t} \sqrt{P^\mathrm{tb}}s_n^\mathrm{t},
\end{equation}
where $g^\mathrm{t}$ is the Rayleigh fading~\cite{Zhang2019Constellation} from the transmitter to the tag. Without loss of generality, let $\mathbb{E}[|g^\mathrm{t}|^2] = 1$. $P^\mathrm{tb}$ denotes the average power from the transmitter received at the backscatter tag and can be calculated as follows:
\begin{equation}
P^\mathrm{tb} = \frac{\kappa P^\mathrm{t} G^\mathrm{t} G^\mathrm{b}}{{(L^\mathrm{tb})}^{\upsilon^\mathrm{tb}}},
\end{equation}
where $L^\mathrm{tb}$ is the distance between the tag and the transmitter and $\upsilon^\mathrm{tb}$ is the path loss exponent.

Denote $c_n$ as the total RF signals received at the backscatter tag. From (\ref{eq:backjammer}) and (\ref{eq:backtransmitter}), we have
\begin{equation}
\label{eq:cn}
c_n = c_n^\mathrm{t} + c_n^\mathrm{j} = g^\mathrm{t} \sqrt{P^\mathrm{tb}}s_n^\mathrm{t} + g^\mathrm{j} \sqrt{P^\mathrm{jb}}s_n^\mathrm{j}.
\end{equation}

From the fundamental of the ambient backscatter communication technology~\cite{Liu2013Ambient},~\cite{Huynh2018Survey}, the backscatter tag has two states (i.e., \textit{OOK}) including the reflecting state denoted by $e=1$ and the non-reflecting state (i.e., absorbing state) denoted by $e=0$. As mentioned, the backscatter rate must be $N$ times lower than the symbol rate of the RF sources. Thus, state $e$ remains unchanged during $N$ source symbols. Denote $s_n^\mathrm{b}$ as the backscattered signals from the backscatter tag, we have
\begin{equation}
\label{eq:sbn}
s_n^\mathrm{b} = \gamma c_n e, 
\end{equation}
where $\gamma$ denotes the reflection coefficient which combines all the effects of the load impedance, antenna impedance, and other components at the ambient backscatter tag~\cite{Zhang2019Constellation}.

The backscatter then backscatters signals to the receiver to send the actual information. We then formulate the backscattered signals received at the $m$-th antenna of the receiver as follows:
\begin{equation}
\label{eq:bn}
b_{mn} = f_m^\mathrm{b} \sqrt{\frac{G^\mathrm{b} G^\mathrm{r} \kappa}{{(L^\mathrm{br})}^{\delta}}} s_n^\mathrm{b},
\end{equation}
where $L^\mathrm{br}$ denotes the distance between the receiver and the backscatter tag, $\delta$ is the path loss exponent, and $f_m^\mathrm{b}$ is the Rayleigh fading~\cite{Zhang2019Constellation} from the tag to the receiver. Without loss of generality, let $\mathbb{E}[|f_m^\mathrm{b}|^2] = 1$. Substituting (\ref{eq:cn}) and (\ref{eq:sbn}) into (\ref{eq:bn}), we have
\begin{equation}
\label{eq:backscattersignal}
\begin{aligned}
b_{mn} &= f_m^\mathrm{b} \sqrt{\frac{G^\mathrm{b} G^\mathrm{r} \kappa}{{(L^\mathrm{br})}^{\delta}}} \gamma e \Big(g^\mathrm{t} \sqrt{P^\mathrm{tb}}s_n^\mathrm{t} + g^\mathrm{j} \sqrt{P^\mathrm{jb}}s_n^\mathrm{j}\Big)\\
& = f_m^\mathrm{b} e \Big(g^\mathrm{t} \sqrt{\frac{\kappa |\gamma|^2 P^\mathrm{tr}{(G^\mathrm{b})}^2{(L^\mathrm{tr})}^{\upsilon^\mathrm{tr}}}{{(L^\mathrm{tb})}^{\upsilon^\mathrm{tb}}{(L^\mathrm{br})}^{\delta}}} s_n^\mathrm{t} + g^\mathrm{j} \sqrt{\frac{\kappa |\gamma|^2 P^\mathrm{jr}{(G^\mathrm{b})}^2{(L^\mathrm{jr})}^{\upsilon^\mathrm{jr}}}{{(L^\mathrm{jb})}^{\upsilon^\mathrm{jb}}{(L^\mathrm{br})}^{\delta}}} s_n^\mathrm{j}\Big).
\end{aligned}
\end{equation}
Denote $\tilde{\alpha}^\mathrm{t} = \frac{\kappa {|\gamma|^2}{(G^\mathrm{b})}^2{(L^\mathrm{tr})}^{\upsilon^\mathrm{tr}}} {{(L^\mathrm{tb})}^{\upsilon^\mathrm{tb}} {(L^\mathrm{br})}^{\delta}}$ and $\tilde{\alpha}^\mathrm{j} = \frac{\kappa {|\gamma|^2}{(G^\mathrm{b})}^2{(L^\mathrm{jr})}^{\upsilon^\mathrm{jr}}} {{(L^\mathrm{jb})}^{\upsilon^\mathrm{jb}} {(L^\mathrm{br})}^{\delta}}$, we can rewrite (\ref{eq:backscattersignal}) as follows:
\begin{equation}
\label{eq:totalbackscatter}
b_{mn} = f_m^\mathrm{b} e \Big(g^\mathrm{t} \sqrt{\tilde{\alpha}^\mathrm{t} P^\mathrm{tr}}s_n^\mathrm{t} + g^\mathrm{j} \sqrt{\tilde{\alpha}^\mathrm{j} P^\mathrm{jr}}s_n^\mathrm{j} \Big).
\end{equation}

\subsection{Received Signals at the Receiver}
\label{sec:received}
Substituting (\ref{eq:directtransmitter}), (\ref{eq:directjammer}), and (\ref{eq:totalbackscatter}) to (\ref{eq:totalreceive}), we have
\begin{equation}
\label{eq:receSignal}
\begin{aligned}
y_{mn} &= d_{mn}^{\mathrm{t}} + d_{mn}^{\mathrm{j}} + b_{mn} + \sigma_{mn}\\
&= f_m^\mathrm{t} \sqrt{P^\mathrm{tr}}s_n^\mathrm{t} + f_m^\mathrm{j} \sqrt{P^\mathrm{jr}}s_n^\mathrm{j} + f_m^\mathrm{b} e \Big(g^\mathrm{t} \sqrt{\tilde{\alpha}^\mathrm{t} P^\mathrm{tr}}s_n^\mathrm{t} + g^\mathrm{j} \sqrt{\tilde{\alpha}^\mathrm{j} P^\mathrm{jr}}s_n^\mathrm{j} \Big) + \sigma_{mn}.
\end{aligned}
\end{equation}
In this paper, we assume that the sampling rate is low, e.g., 10 MHz, and the distance between the transmitter and the receiver is not too long, e.g., less than 30 meters. As such, the time delay between the receptions of $d_{mn}^{\mathrm{t}}$, $d_{mn}^{\mathrm{j}}$, and $b_{mn}$ at the receiver is negligible and can be ignored~\cite{Liu2013Ambient},~\cite{Qian2017Nocoherent}. More importantly, the authors in~\cite{Qian2017Nocoherent} demonstrated that even if the time delay is noticeable, it will not affect the signal detection performance.

As $\sigma_{mn} \sim \mathcal{CN}(0,1)$, the noise power equals to 1. Thus, we denote $\alpha^\mathrm{jr} \triangleq P^\mathrm{jr}$ and $\alpha^\mathrm{tr} \triangleq P^\mathrm{tr} $ as the average receive signal-to-noise ratio (SNR) of the direct link from the jammer to the receiver and the average receive SNR of the direct link from the transmitter to the receiver, respectively. Similarly, we denote $\alpha^\mathrm{jb} \triangleq \tilde{\alpha}^\mathrm{j} P^\mathrm{jr}$ and $\alpha^\mathrm{tb} \triangleq \tilde{\alpha}^\mathrm{t} P^\mathrm{tr}$ as the average receive SNR of the backscatter links (jammer-tag-receiver and transmitter-tag-receiver, respectively). Thus, (\ref{eq:receSignal}) can be rewritten as follows:
\begin{equation}
y_{mn} = \underbrace{f_m^\mathrm{t} \sqrt{\alpha^\mathrm{tr}}s_n^\mathrm{t} + f_m^\mathrm{j} \sqrt{\alpha^\mathrm{jr}}s_n^\mathrm{j}}_{\text{direct links}} + \underbrace{f_m^\mathrm{b}  e \Big(g^\mathrm{t} \sqrt{\alpha^\mathrm{tb}}s_n^\mathrm{t} + g^\mathrm{j} \sqrt{\alpha^\mathrm{jb}}s_n^\mathrm{j} \Big)}_{\text{backscatter links}} + \sigma_{mn}.
\end{equation}
As there are $M$ antennas at the receiver, the channel response vector is defined as follows:
\begin{equation}
\mathbf{f}^\mathrm{t} = [f_1^\mathrm{t}, \ldots, f_m^\mathrm{t}, \ldots, f_M^\mathrm{t}]^\mathsf{T},
\end{equation}
\begin{equation}
\mathbf{f}^\mathrm{j} = [f_1^\mathrm{j}, \ldots, f_m^\mathrm{j}, \ldots, f_M^\mathrm{j}]^\mathsf{T},
\end{equation}
\begin{equation}
\mathbf{f}^\mathrm{b} = [f_1^\mathrm{b}, \ldots, f_m^\mathrm{b}, \ldots, f_M^\mathrm{b}]^\mathsf{T}.
\end{equation}
As such, the total signals received at $M$ antennas can be expressed as follows:
\begin{equation}
\mathbf{y}_{n} = \underbrace{\mathbf{f}^\mathrm{t} \sqrt{\alpha^\mathrm{tr}}s_n^\mathrm{t} + \mathbf{f}^\mathrm{j} \sqrt{\alpha^\mathrm{jr}}s_n^\mathrm{j}}_{\text{direct links}} + \underbrace{\mathbf{f}^\mathrm{b}  e \Big(g^\mathrm{t} \sqrt{\alpha^\mathrm{tb}}s_n^\mathrm{t} + g^\mathrm{j} \sqrt{\alpha^\mathrm{jb}}s_n^\mathrm{j} \Big)}_{\text{backscatter links}} + \bm{\sigma}_{n},
\end{equation}
where 
\begin{equation}
\mathbf{y}_{n} = [y_{1n}, \ldots, y_{mn}, \ldots, y_{Mn}]^\mathsf{T},
\end{equation}
\begin{equation}
\bm{\sigma}_{n} = [\sigma_{1n}, \ldots, \sigma_{mn}, \ldots, \sigma_{Mn}]^\mathsf{T}.
\end{equation}
\begin{figure}[!]
	\centering
	\includegraphics[scale=0.3]{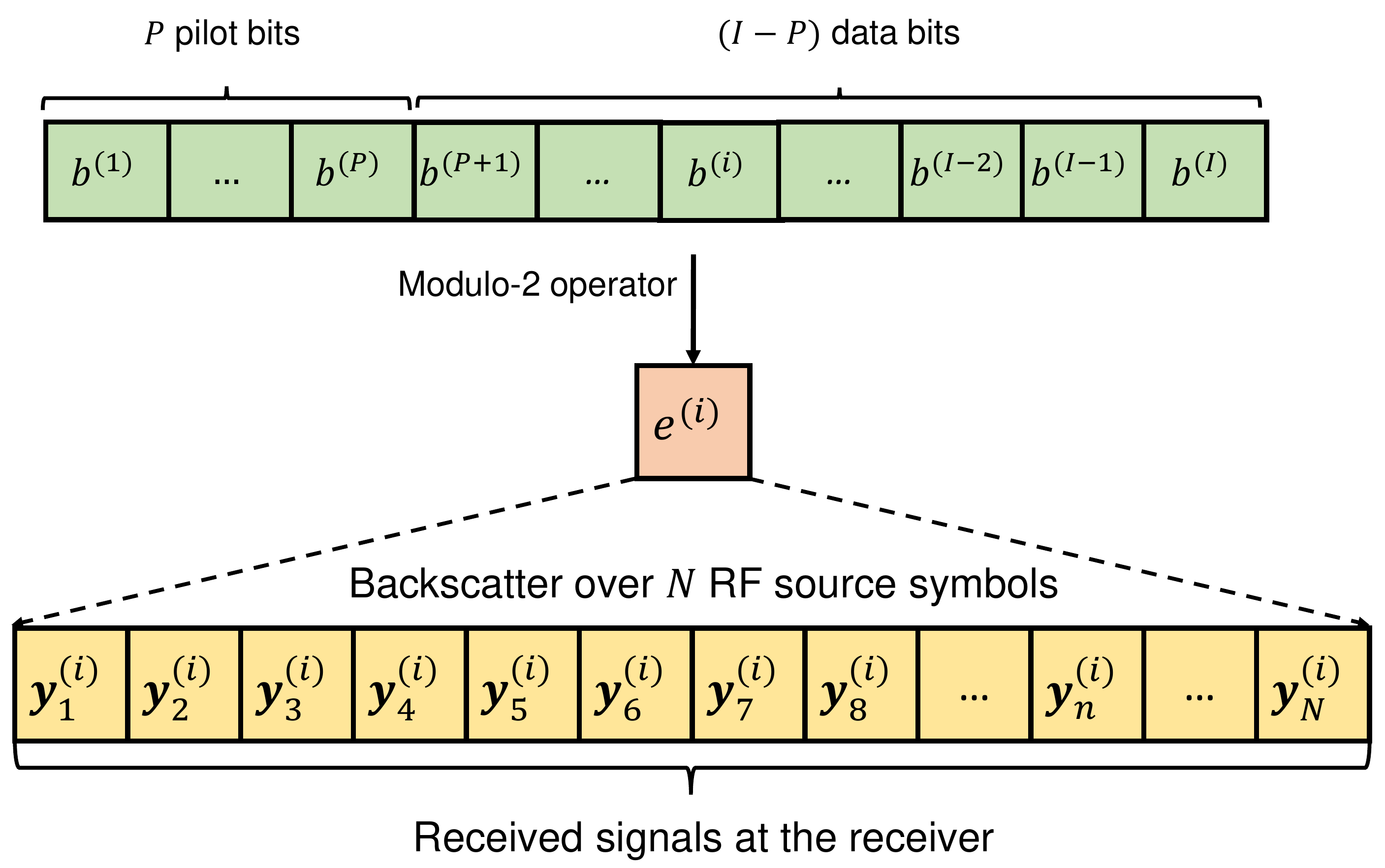}
	\caption{Backscatter frame.}
	\label{fig:backscatter_frame} 
\end{figure}
In this paper, each backscatter frame $\mathbf{b}$ is assumed to contain $I$ bits as follows:
\begin{equation}
\mathbf{b} = [b^{(1)},\ldots, b^{(i)}, \ldots, b^{(I)}],
\end{equation}
where $b^{(i)} \in \{0,1\}, \forall i = 1,2,\ldots, I$. The backscatter frame consists of $P$ pilot bits and $I-P$ data bits as illustrated in Fig.~\ref{fig:backscatter_frame}. The pilot bits are known by the receiver in advance and are used to implicitly capture the channel coefficients for the proposed DL-based detector in Section~\ref{sec:deeplearning}. For simplicity, we assume that the channel remains invariant during one backscatter frame. In this paper, we use the modulo-2 operator to differentially encode the original bits. By using the modulo-2 differential encoder, widely used in ambient backscatter systems~\cite{Wang2016Ambient, Hu2020Ambient, Devineni2019Ambient, Guo2019Nocoherent}, bit ``0'' presents the same state in two consecutive intervals and bit ``1'' corresponds to the transition from non-backscatter to backscatter or from backscatter to non-backscatter~\cite{Wang2016Ambient}. In this way, the receiver can decode the backscattered signals at low BER without requiring long training sequences and CSI according to~\cite{Wang2016Ambient}. Moreover, the modulo-2 differential encoder can also reduce the power consumption of the system~\cite{Wang2016Ambient}. This is very helpful in ambient backscatter systems where the backscatter tags are battery free and transmitting data at low rates. In particular, before being backscattered, each original bit is encoded as follows~\cite{Guo2019Nocoherent}:
\begin{equation}
\label{eq:decode}
e^{(i)} = e^{(i-1)} \oplus b^{(i)},
\end{equation}
where $\oplus$ presents the modulo-2 operator and $\mathbf{e}=[e^{(1)}, \ldots, e^{(i)}, \ldots, e^{(I)}]$ is the encoded symbol vector in which $e^{(0)} = 1$ is the reference symbol. Note that, each backscatter symbol $e^{(i)}$ is backscattered over $N$ RF source symbols as illustrated in Fig.~\ref{fig:backscatter_frame}. This backscattering process can be viewed as a kind of channel coding/modulation. Here, when we increase the value of $N$, the backscatter rate is reduced (i.e., less number of bits can be backscattered over the backscattering channel for a given number of ambient signal symbols), and thus the number of backscattered bits is also reduced. However, with a higher value of $N$, we can achieve a better BER performance as the receiver can receive more backscattered signals to better decode the backscattered bit. This is similar to the case that we use a code word of length $N$ to encode an information bit/symbol.

The received signal during the $i$-th backscatter symbol period can be obtained as follows:
\begin{equation}
\mathbf{y}_{n}^{(i)} = \mathbf{f}^\mathrm{t} \sqrt{\alpha^\mathrm{tr}}{(s_n^\mathrm{t})}^{(i)} + \mathbf{f}^\mathrm{j} \sqrt{\alpha^\mathrm{jr}}{(s_n^\mathrm{j})}^{(i)} + \mathbf{f}^\mathrm{b}  e^{(i)} \Big(g^\mathrm{t} \sqrt{\alpha^\mathrm{tb}}{(s_n^\mathrm{t})}^{(i)} + g^\mathrm{j} \sqrt{\alpha^\mathrm{jb}}{(s_n^\mathrm{j})}^{(i)} \Big) + \bm{\sigma}_{n}^{(i)},
\end{equation}
where $n=1,2,\ldots, N$ and $i=1,2,\ldots, I$. Recall that although various works, e.g., \cite{Darsena2017Modeling, Zhao2018Ambient, Qian2019Achievable} have investigated the maximum achievable rate or capacity of the backscattering channel, to the best of our knowledge, none of them were able to obtain a closed-form expression of the backscattering channel's capacity. In the following, we present a numerical method based on~\cite{Guo2019Nocoherent} to estimate the maximum achievable backscatter rate or the capacity of the ambient backscatter channel. Here we prefer to use the term ``maximum achievable backscatter rate'' that is widely used in the literature of backscattering communications and also more intuitive to readers given the working principle of backscattering communications~\cite{Liu2013Ambient, book}. Note that in Observation~\ref{theo:maxrate}, we obtain the maximum achievable backscatter rate with $N=1$. To the best of our knowledge, all existing works in the literature also cannot obtain the maximum achievable backscatter rate for cases with $N>1$. This could be a potential research direction.

\begin{observation}
\label{theo:maxrate}
Let $I(e;\mathbf{y})$ denote the mutual information between the received signals $\mathbf{y}$ at the receiver and the encoded information $e$. The maximum achievable backscatter rate $R_\text{b}^\dagger$ of the backscatter tag can be numerically obtained as follows:
\begin{equation}
\begin{aligned}
R_\text{b}^\dagger = \max\limits_{\theta_0} I(e;\mathbf{y})&= \max\limits_{\theta_0} \Big( H_\mathrm{b}(\theta_0) - \mathbb{E}_{\{\mathbf{y}_0\}}[H_\mathrm{b}(\omega_0)]\Big)\\
& = \max\limits_{\theta_0} \Big( H_\mathrm{b}(\theta_0) - \int_{\mathbf{y}_0}(\theta_0p(\mathbf{y}_0|e=0) +\theta_1 p(\mathbf{y}_0|e=1))H_\mathrm{b}(\omega_0)d\mathbf{y}_0\Big),
\end{aligned}
\end{equation}
where $\theta_0$ is the probability of the backscattering bit being 0 (the probability of the backscattering bit being 1 is $\theta_1 = 1-\theta_0$), $\mathbf{y}_0$ is a realization of $\mathbf{y}$, $H_\mathrm{b}$ is the binary entropy function, $\omega_0$ is the posterior probability of receiving bit, and $p(\mathbf{y}_0|e=0)$ and $p(\mathbf{y}_0|e=1)$ are the conditional probability density functions corresponding to $e=0$ and $e=1$.
\end{observation}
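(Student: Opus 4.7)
The plan is to derive the expression as the channel capacity of a binary-input, vector-output memoryless channel, using only standard definitions from information theory. The key observation is that, with $N=1$, one backscatter symbol $e \in \{0,1\}$ induces exactly one received vector $\mathbf{y}$ through the signal model developed in Section~\ref{sec:received}, and the transmitter/jammer symbols $s_n^\mathrm{t}, s_n^\mathrm{j}$ together with the noise $\bm{\sigma}_n$ are i.i.d. across symbol periods. Consequently, the channel $e \mapsto \mathbf{y}$ is memoryless, and Shannon's channel coding theorem gives the capacity (i.e., maximum achievable rate) as $R_\text{b}^\dagger = \max_{p(e)} I(e;\mathbf{y})$, where the maximization is over input distributions on $\{0,1\}$, i.e., over the single parameter $\theta_0 = \Pr(e=0)$ with $\theta_1 = 1-\theta_0$.

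Next, I would expand the mutual information in the form most convenient for a binary input,
\begin{equation*}
I(e;\mathbf{y}) = H(e) - H(e\mid \mathbf{y}).
\end{equation*}
For the first term, since $e$ is Bernoulli$(\theta_1)$, we immediately get $H(e) = H_\mathrm{b}(\theta_0)$. For the second term, I would use the standard definition of conditional entropy as an expectation over the output: $H(e \mid \mathbf{y}) = \mathbb{E}_{\mathbf{y}_0}\bigl[H(e\mid \mathbf{y}=\mathbf{y}_0)\bigr]$. Given a realization $\mathbf{y}_0$, the posterior $\Pr(e=0 \mid \mathbf{y}=\mathbf{y}_0) = \omega_0$ is again Bernoulli, so $H(e\mid \mathbf{y}=\mathbf{y}_0) = H_\mathrm{b}(\omega_0)$. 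Combining these yields the first equality in the statement.

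To obtain the second (integral) form, I would write the expectation explicitly using the marginal density $p(\mathbf{y}_0) = \theta_0 p(\mathbf{y}_0\mid e=0) + \theta_1 p(\mathbf{y}_0\mid e=1)$, which follows from the law of total probability applied to the binary input. Substituting this marginal into $\mathbb{E}_{\mathbf{y}_0}[H_\mathrm{b}(\omega_0)] = \int_{\mathbf{y}_0} p(\mathbf{y}_0) H_\mathrm{b}(\omega_0)\, d\mathbf{y}_0$ directly produces the displayed integral expression. The posterior $\omega_0$ itself is given by Bayes' rule in terms of the same conditional densities, which ensures that the integral is well-defined once the channel statistics (Rayleigh fadings, Gaussian RF source symbols and noise) are specified.

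The main obstacle is not algebraic but conceptual: justifying that the formula is indeed the capacity (and hence the maximum achievable rate in the operational sense) requires the memoryless-channel argument above, together with the restriction to $N=1$ so that there is a one-to-one correspondence between input symbols and output vectors. For $N>1$, the backscatter symbol is held constant over $N$ channel uses while $s_n^\mathrm{t}, s_n^\mathrm{j}$ vary, which turns the problem into a block-coded channel whose capacity does not admit the same simple single-letter form; this is precisely why the statement is restricted to $N=1$ and flagged as a numerical, rather than closed-form, evaluation (the integral over $\mathbf{y}_0$ is generally intractable analytically and must be computed by Monte Carlo sampling of the noise, fading, and RF source symbols).
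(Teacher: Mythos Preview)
Your proposal is correct and follows essentially the same approach as the paper's proof: decompose $I(e;\mathbf{y}) = H(e) - H(e\mid\mathbf{y})$, identify $H(e) = H_\mathrm{b}(\theta_0)$, write $H(e\mid\mathbf{y})$ as an expectation of $H_\mathrm{b}(\omega_0)$ with $\omega_0$ given by Bayes' rule, and expand the expectation using the mixture marginal $p(\mathbf{y}_0) = \theta_0 p(\mathbf{y}_0\mid e=0) + \theta_1 p(\mathbf{y}_0\mid e=1)$. The paper's appendix does exactly these steps; your additional remarks invoking Shannon's theorem for the memoryless binary-input channel and explaining why the single-letter formula requires $N=1$ are not in the paper's proof but are helpful context rather than a different route.
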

\begin{proof}
	The proof of Observation~\ref{theo:maxrate} is provided in Appendix~\ref{appendix:maximumbackscatterrate}.
\end{proof}

Denote $\mathbf{Y}^{(i)} = [\mathbf{y}_1^{(i)}, \ldots, \mathbf{y}_n^{(i)}, \ldots, \mathbf{y}_N^{(i)}]^\mathsf{T}$ as the received signal sequence during the $i$-th backscatter symbol period. Next, we develop the ML detector and the DL-based detector to recover the original bit $b^{(i)}$ from $\mathbf{Y}^{(i)}$.

\section{Decoding Backscattered Information at the Receiver with Maximum Likelihood Detector}
\label{sec:decoding}

As mentioned, to the best of our knowledge, there is no solution in the literature that can deal with the considered super-reactive jammer. Our proposed anti-jamming solution can enable the transmitter to lure/trap the jammer to allow the backscatter tag to leverage the RF signals from both the jammer and the transmitter to send the actual information to the receiver. However, decoding the weak backscattered signals is very challenging as the received signals are combinations of the backscattered signals and the direct link signals from the transmitter and the jammer. In this section, we provide details on decoding the backscattered signals by using the optimal ML detector. It is worth noting that the ML detector is well known as an optimal signal detector, and thus it provides an upper-bound performance.

\subsection{Likelihood Functions of Received Signals}

The fundamental task of the ML detector is deriving the likelihood functions of the received signals at the receiver. Specifically, when the backscatter tag backscatters bits ``0'', i.e., $e^{(i)} = 0$, there is no backscattered signals in the received signals. Otherwise, the received signals contain both the direct link signals and the backscattered signals when the tag backscatters bits ``1'', i.e., $e^{(i)} = 1$. Thus, we first derive the channel statistical covariance matrices corresponding to these cases as follows~\cite{Guo2019Nocoherent},~\cite{Zhang2019Constellation}:

\begin{equation}
\begin{aligned}
\mathbf{K}_1 &= (\mathbf{h}_1+\mathbf{h}_2)(\mathbf{h}_1+\mathbf{h}_2)^\text{H} + (\mathbf{h}_3+\mathbf{h}_4)(\mathbf{h}_3+\mathbf{h}_4)^\text{H} + \mathbf{I}_M,\\
\mathbf{K}_0 &= \mathbf{h}_1 \mathbf{h}_1^\text{H} + \mathbf{h}_3 \mathbf{h}_3^\text{H} + \mathbf{I}_M,
\end{aligned}
\end{equation} 
where $\mathbf{h}_1 = \mathbf{f}^\mathrm{t} \sqrt{\alpha^\mathrm{tr}}$, $\mathbf{h}_2 =  g^\mathrm{t} \mathbf{f}^\mathrm{b} \sqrt{\alpha^\mathrm{tb}}$, $\mathbf{h}_3 = \mathbf{f}^\mathrm{j} \sqrt{\alpha^\mathrm{jr}}$, $\mathbf{h}_4 =  g^\mathrm{j} \mathbf{f}^\mathrm{b} \sqrt{\alpha^\mathrm{jb}}$, $\mathbf{I}_M$ is the $M \times M$ identity matrix, and $\mathbf{A}^\text{H}$ is the conjugate transpose of matrix $\mathbf{A}$. $\mathbf{K}_1$ is the channel statistical covariance matrix when backscattering bits ``1'', and $\mathbf{K}_0$ is the channel statistical covariance matrix when backscattering bits ``0''. Note that the channel state information can be obtained through several common estimation techniques that are widely studied in the literature~\cite{Ma2018Blind},~\cite{Zhao2019Channel}. As the noise and the RF signals generated from the transmitter and the jammer follow the CSCG distribution\footnote{Note that our proposed deep learning-based detector does not require this information in advance.}. As such, $\mathbf{y}_n^{(i)}$ is a CSCG distributed vector. Given backscatter symbol $e^{(i)}$ and received signals $\mathbf{y}_n^{(i)}$, we can derive the conditional probability density functions (PDFs) corresponding to $e^{(i)} = 0$ and $e^{(i)} = 1$ as follows:
\begin{equation}
\label{eq:pdf}
\begin{aligned}
p(\mathbf{y}_n^{(i)}|e^{(i)} = 0) &= \frac{1}{\pi^M |K_0|} e^{-{\mathbf{y}_n^{(i)}}^\text{H} K_0^{-1}\mathbf{y}_n^{(i)}}, \\
p(\mathbf{y}_n^{(i)}|e^{(i)} = 1) &= \frac{1}{\pi^M |K_1|} e^{-{\mathbf{y}_n^{(i)}}^\text{H} K_1^{-1}\mathbf{y}_n^{(i)}}.
\end{aligned}
\end{equation}
Based on (\ref{eq:pdf}), the likelihood functions of the received signals sequence $\mathbf{Y}^{(i)} = [\mathbf{y}_1^{(i)}, \ldots, \mathbf{y}_n^{(i)}, \ldots, \mathbf{y}_N^{(i)}]^\mathsf{T}$ given $e^{(i)} = 0$ and $e^{(i)} = 1$ can be expressed as follows~\cite{Guo2019Nocoherent}:
\begin{equation}
\label{eq:likelihoodfunctions}
\begin{aligned}
\mathcal{L}(\mathbf{Y}^{(i)}|e^{(i)} = 0) &= \prod_{n=1}^{N} \frac{1}{\pi^M |K_0|} e^{-{\mathbf{y}_n^{(i)}}^\text{H} K_0^{-1}\mathbf{y}_n^{(i)}},\\
\mathcal{L}(\mathbf{Y}^{(i)}|e^{(i)} = 1) &= \prod_{n=1}^{N} \frac{1}{\pi^M |K_1|} e^{-{\mathbf{y}_n^{(i)}}^\text{H} K_1^{-1}\mathbf{y}_n^{(i)}}.
\end{aligned}
\end{equation}

\subsection{Maximum Likelihood Detector}
Based on (\ref{eq:likelihoodfunctions}), the backscattered symbol $e^{(i)}$ can be obtained through the following ML criterion (i.e., hypothesis)~\cite{Guo2019Nocoherent},~\cite{Zhang2019Constellation}:
\begin{equation}
\hat{e}^{(i)} 	=	\left\{	\begin{array}{ll}
0,	&	\mathcal{L}(\mathbf{Y}^{(i)}|e^{(i)} = 0) > \mathcal{L}(\mathbf{Y}^{(i)}|e^{(i)} = 1),\\
1,	&	\mathcal{L}(\mathbf{Y}^{(i)}|e^{(i)} = 0) < \mathcal{L}(\mathbf{Y}^{(i)}|e^{(i)} = 1),
\end{array}	\right.
\end{equation}
where $\hat{e}^{(i)}$ is the decision result of $e^{(i)}$. Based on (\ref{eq:likelihoodfunctions}), the ML detector for the backscattered symbol $e^{(i)}$ is expressed as follows:
\begin{equation}
\hat{e}^{(i)} 	=	\left\{	\begin{array}{ll}
0,	&	\prod_{n=1}^{N} \frac{1}{\pi^M |K_0|} e^{-{\mathbf{y}_n^{(i)}}^\text{H} K_0^{-1}\mathbf{y}_n^{(i)}} > \prod_{n=1}^{N} \frac{1}{\pi^M |K_1|} e^{-{\mathbf{y}_n^{(i)}}^\text{H} K_1^{-1}\mathbf{y}_n^{(i)}},\\
1,	&	\prod_{n=1}^{N} \frac{1}{\pi^M |K_0|} e^{-{\mathbf{y}_n^{(i)}}^\text{H} K_0^{-1}\mathbf{y}_n^{(i)}} < \prod_{n=1}^{N} \frac{1}{\pi^M |K_1|} e^{-{\mathbf{y}_n^{(i)}}^\text{H} K_1^{-1}\mathbf{y}_n^{(i)}}.
\end{array}	\right.
\end{equation}
Using the logarithm operations, we have
\begin{equation}
\hat{e}^{(i)}	=	\left\{	\begin{array}{ll}
0,	&	\sum_{n=1}^{N} {\mathbf{y}_n^{(i)}}^\text{H} (K_0^{-1}-K_1^{-1})\mathbf{y}_n^{(i)} < N\ln\frac{|K_1|}{|K_0|},\\
1,	&	\sum_{n=1}^{N} {\mathbf{y}_n^{(i)}}^\text{H} (K_0^{-1}-K_1^{-1})\mathbf{y}_n^{(i)} > N\ln\frac{|K_1|}{|K_0|}.
\end{array}	\right.
\end{equation}
Based on $\hat{e}^{(i)}$ we can derive the backscattered bit $e^{(i)}$ and then recover the original bit $b^{(i)}$ based on (\ref{eq:decode}). In this way, the ML detector can achieve the optimal detection performance and can be used as an upper-bound in this work.


\section{Deep Learning-Based Signal Detector}
\label{sec:deeplearning}

As mentioned, the performance of conventional signal detection techniques greatly depends on the channel and noise distributions. Specifically, these techniques require the model of the current propagation environment and the channel state information. For example, the optimal ML detector in this paper is formulated for Gaussian-like signals and noise (i.e., CSCG). Nevertheless, if the jammer changes its attack strategy or the system is placed at different wireless environments, these conventional detectors may need to be reformulated based on the environment's parameters to achieve a good performance. To address this problem, in this section, we propose a deep learning (DL) based signal detector to detect backscattered signals at the receiver that can automatically adapt to any channel and noise distributions, and thus broadening the applications of the proposed detector, especially in robust anti-jamming strategies as the reactive jammer may perform different attack strategies. Note that a few deep learning-based detectors have been proposed in the literature~\cite{Fan2019CNN,Ye2018Power,Shea2018Over,Rajendran2018Deep}. However, these detectors may not be feasible for backscattered signals which are weak and contain the mixture versions of the direct link signals from the jammer and the transmitter. Our proposed DL-based signal detector can deal with this problem by using the LSTM network to learn the long-term dependencies of the received signals at the multiple antennas and over $N$ RF source symbols for each backscattered bit. To the best of our knowledge, our proposed detector is the first DL-based detector for ambient backscatter communication systems.

\subsection{Data Preprocessing}
Data processing, as known as data mining, plays a vital role in deep learning. In particular, the quality of the training data and the useful information that can be obtained from it greatly affects the learning performance of the deep learning model. In practice, the raw data will not be clean enough and will not be in a proper format that the deep learning model can easily learn to achieve good results. As such, the data must be preprocessed before feeding to the deep learning model. In the following, we describe in details how the data is processed in this work.

Recall that the received signal during the $i$-th backscatter symbol period is obtained as follows:
\begin{equation}
\mathbf{y}_{n}^{(i)} = \mathbf{f}^\mathrm{t} \sqrt{\alpha^\mathrm{tr}}{(s_n^\mathrm{t})}^{(i)} + \mathbf{f}^\mathrm{j} \sqrt{\alpha^\mathrm{jr}}{(s_n^\mathrm{j})}^{(i)} + \mathbf{f}^\mathrm{b}  e^{(i)} \Big(g^\mathrm{t} \sqrt{\alpha^\mathrm{tb}}{(s_n^\mathrm{t})}^{(i)} + g^\mathrm{j} \sqrt{\alpha^\mathrm{jb}}{(s_n^\mathrm{j})}^{(i)} \Big) + \bm{\sigma}_{n}^{(i)},
\end{equation}
where $n=1,2,\ldots, N$ and $i=1,2,\ldots, I$. Denote $\mathbf{Y}^{(i)} = [\mathbf{y}_1^{(i)}, \ldots, \mathbf{y}_n^{(i)}, \ldots, \mathbf{y}_N^{(i)}]^\mathsf{T}$ as the received signal sequence in the $i$-th symbol period. It is worth noting that the received signal sequence $\mathbf{Y}^{(i)}$ can be shortened to the sample covariance matrix to reduce the size of the training data while maintaining the effectiveness of the deep learning model. Specifically, the sample covariance matrix can be expressed as follows:
\begin{equation}
\mathbf{C}^{(i)} = \frac{1}{N} \sum_{n=1}^{N}\mathbf{y}_{n}^{(i)}{\mathbf{y}_{n}^{(i)}}^\text{H},
\end{equation}
where ${\mathbf{y}_{n}^{(i)}}^\text{H}$ is the conjugate transpose of received signal matrix $\mathbf{y}_{n}^{(i)}$. In this way, the sample covariance matrix $\mathbf{C}^{(i)}$ represents all the useful information of the received signal sequence $\mathbf{Y}^{(i)}$.

\begin{figure}[!]
	\centering
	\includegraphics[scale=0.7]{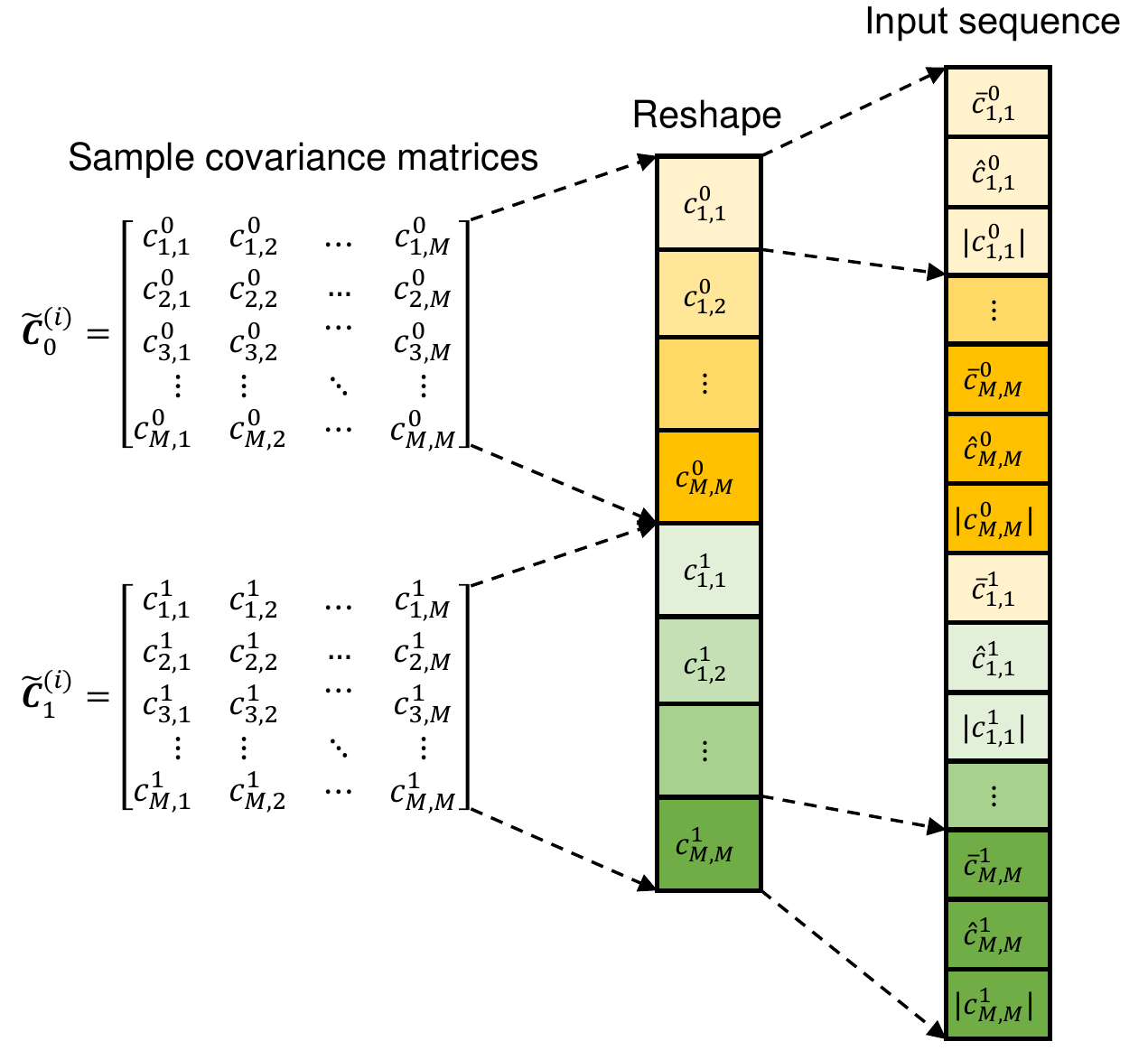}
	\caption{Data preprocessing.}
	\label{fig:reshapingdata}
\end{figure}

It is worth noting that it is impossible to effectively estimate the channel coefficients $\mathbf{h}_1$, $\mathbf{h}_2$, $\mathbf{h}_3$, $\mathbf{h}_4$ in the presence of the jammer, especially when the attack power level is high. Thus, in this paper, instead of explicitly requiring the channel coefficients as in the conventional maximum likelihood detector, our proposed DL-based detector leverages the pilot bits to implicitly capture the channel coefficients. In particular, as mentioned in Section~\ref{sec:received}, each backscatter frame consists of $P$ pilot bits. In this paper, we set the numbers of pilot bits ``0'' and ``1'' equally at $P/2$. As the receiver knows these pilot bits in advance, it can implicitly capture the channel coefficients by observing the received signals corresponding to bit ``0'' and bit ``1''. We denote $\widetilde{\mathbf{K}}_0$ and $\widetilde{\mathbf{K}}_1$ as the average of the estimated covariance matrices of the received signals of pilot bits ``0'' and ``1'', respectively. Then, $\widetilde{\mathbf{K}}_0$ and $\widetilde{\mathbf{K}}_1$ are expressed as follows:
	\begin{equation}
		\begin{aligned}
			\widetilde{\mathbf{K}}_0 = \frac{2}{PN}\sum_{i=1}^{P/2}\sum_{n=1}^{N}\mathbf{y}_n^{(i)}\mathbf{y}_n^{(i)^\mathrm{H}} + \mathbf{I}_M, \mbox{when $e^{(i)} = 0$},\\
			\widetilde{\mathbf{K}}_1 = \frac{2}{PN}\sum_{i=1}^{P/2}\sum_{n=1}^{N}\mathbf{y}_n^{(i)}\mathbf{y}_n^{(i)^\mathrm{H}} + \mathbf{I}_M, \mbox{when $e^{(i)} = 1$}.
		\end{aligned}
\end{equation}

In this way, the receiver does not need to know the channel coefficients to detect the backscattered bits as $\widetilde{\mathbf{K}}_0$ and $\widetilde{\mathbf{K}}_1$ can effectively capture the channel coefficients. Clearly, when the tag backscatters bit ``1'', the elements of $\mathbf{C}^{(i)}$ will be similar with those of $\widetilde{\mathbf{K}}_1$ and different from $\widetilde{\mathbf{K}}_0$. Likewise, when the tag backscatters bit ``0'', the elements of $\mathbf{C}^{(i)}$ will be similar with those of $\widetilde{\mathbf{K}}_0$ and different from $\widetilde{\mathbf{K}}_1$. As such, in~(\ref{eq:estimate}), we multiply $\mathbf{C}^{(i)}$ with the inverse of $\widetilde{\mathbf{K}}_0$ and $\widetilde{\mathbf{K}}_1$ to better represent these similarity and difference. This can result in a high detection accuracy of the deep learning model as demonstrated in Section~\ref{sec:evaluation}.
\begin{equation}
	\label{eq:estimate}
\begin{aligned}
\widetilde{\mathbf{C}}^{(i)}_0 = \mathbf{C}^{(i)}\widetilde{\mathbf{K}}_0^{-1},\\
\widetilde{\mathbf{C}}^{(i)}_1 = \mathbf{C}^{(i)}\widetilde{\mathbf{K}}_1^{-1}.
\end{aligned}
\end{equation}
Finally, both $\widetilde{\mathbf{C}}^{(i)}_0$ and $\widetilde{\mathbf{C}}^{(i)}_1$ are fed to the deep learning model, and then the deep learning model can learn and estimate the backscattered bits.

In Fig.~\ref{fig:reshapingdata}, we show how to reshape $\widetilde{\mathbf{C}}^{(i)}_0$ and $\widetilde{\mathbf{C}}^{(i)}_1$ and feed them to the deep learning model. For simplicity, we remove the indicator $(i)$. $c^0_{i,j}$ with $i,j \in \{1,2,\ldots,M\}$ represents an element of $\widetilde{\mathbf{C}}^{(i)}_0$ and $c^1_{i,j}$ with $i,j \in \{1,2,\ldots,M\}$ represents an element of $\widetilde{\mathbf{C}}^{(i)}_1$. Note that with $M$ antennas at the receive, $\widetilde{\mathbf{C}}^{(i)}_0$ and $\widetilde{\mathbf{C}}^{(i)}_1$ are $M \times M$ matrices. We then reshape $\widetilde{\mathbf{C}}^{(i)}_0$ and $\widetilde{\mathbf{C}}^{(i)}_1$ into two arrays, and each array contains $M \times M$ elements. Thus, we have total $2 \times M \times M$ elements. We then construct the input sequence of the deep learning model by taking the real part $\bar{c}^k_{i,j}$, the imaginary part $\hat{c}^k_{i,j}$, and the absolute value $|c^k_{i,j}|$ of each element $c^k_{i,j}$ with $k \in \{0,1\}$ and $i,j \in \{1,2,\ldots, M\}$. In this way, all aspects of the received signals can be effectively learned by the deep learning model. Then, the input sequence is fed into the deep neural network for learning. As a result, the deep learning model can learn all aspects of the received signals and thus improve the learning efficiency.

\subsection{Deep Neural Network Architecture}
In this work, we adopt Long Short-Term Memory (LSTM)~\cite{LSTM},~\cite{LSTM_MatLab} network that can quickly learn the received signals to detect backscattered bits. An LSTM network is a type of recurrent neural network that can capture and learn the long-term dependencies between input data~\cite{LSTM}. As such, this network architecture is widely adopted in sequence prediction problems. The reasons for using the LSTM network in this work are twofold.

\begin{itemize}
\item First, the input sequence consists of the estimated channel statistical covariances at $M$ antennas (with the same backscattered bit) of the receiver. Thus, with the LSTM network, the dependencies between these covariances are kept track by the memory part of the LSTM unit during the training process. In this way, the LSTM network can detect the backscattered bits more effectively by learning from all received signals at $M$ antennas and the relations between them.
\item Second, as mentioned, the channel remains invariant during one backscatter frame which consists of $I$ bits being backscattered. Each backscattered bit corresponds to an input sequence for the deep learning model to predict. Thus, by using the LSTM network, we can capture and learn the dependencies between these input sequences, and thus keeping track of the channel state during one backscatter frame. As such, the deep learning model can learn the input data more efficiently resulting in a high prediction accuracy.
\end{itemize}

\begin{figure}[!]
	\centering
	\includegraphics[scale=0.27]{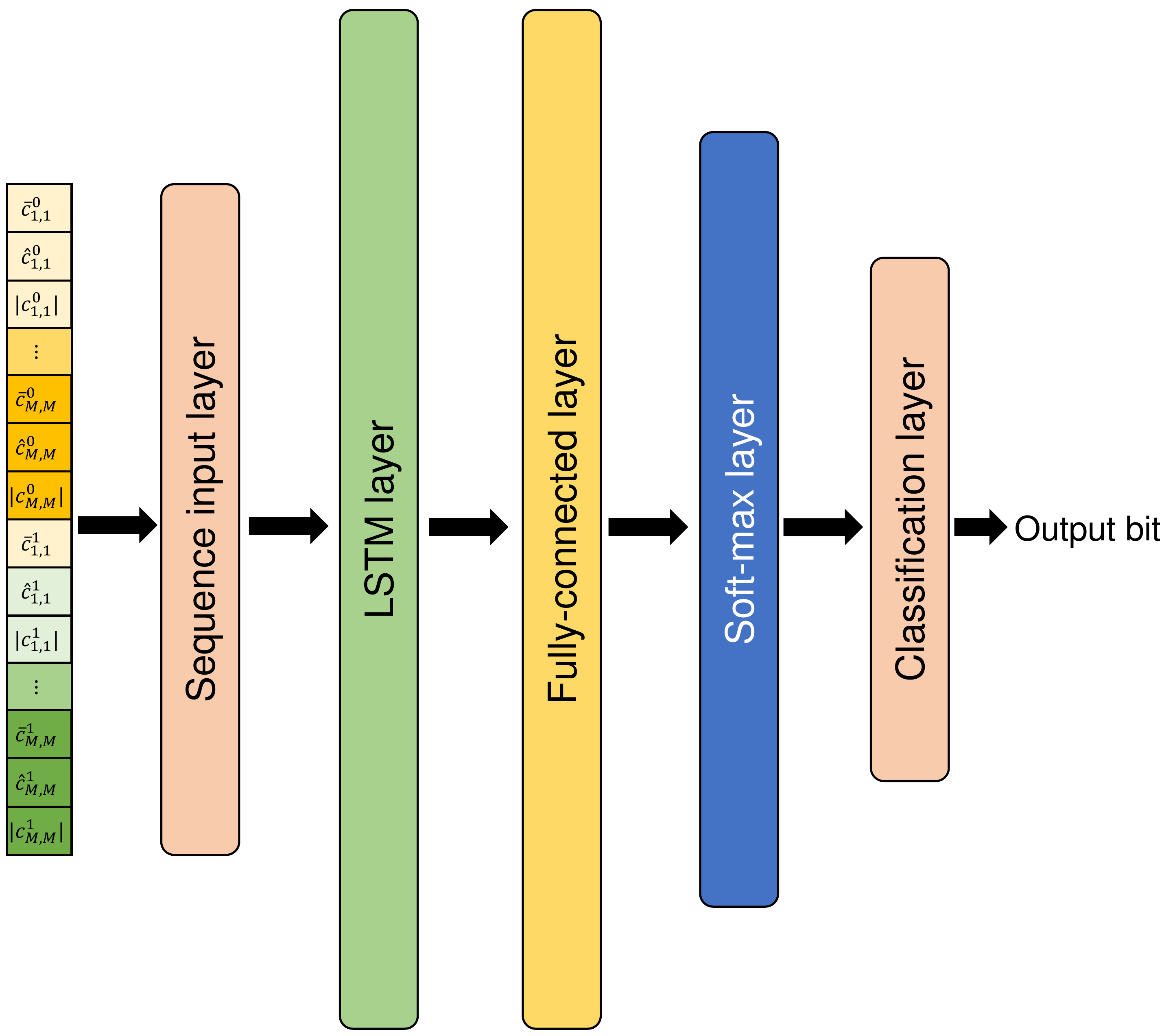}
	\caption{Deep neural network architecture.}
	\label{fig:LSTM}
\end{figure}

In Fig.~\ref{fig:LSTM}, we illustrate the deep neural network architecture in this work. In particular, the deep neural network consists of one sequence input layer, one LSTM layer, one fully connected layer, one softmax layer, and one classification layer. In particular, the sequence input layer is used to import the input data to the deep neural network. After that, the input data is forwarded to the LSTM layer. Specifically, the LSTM layer has an output (also known as the hidden state) denoted by $\mathbf{w}$ and a cell state denoted by $\mathbf{s}$ as illustrated in Fig.~\ref{fig:LSTMBlock}~\cite{LSTM_MatLab}. At each time step $t$, the LSTM layer uses the current output $\mathbf{w}_{t-1}$ and cell state $\mathbf{s}_{t-1}$ as well as the input sequence data to calculate output $\mathbf{w}_t$ and cell state $\mathbf{s}_t$.

\begin{figure}[!]
	\centering
	\includegraphics[scale=0.3]{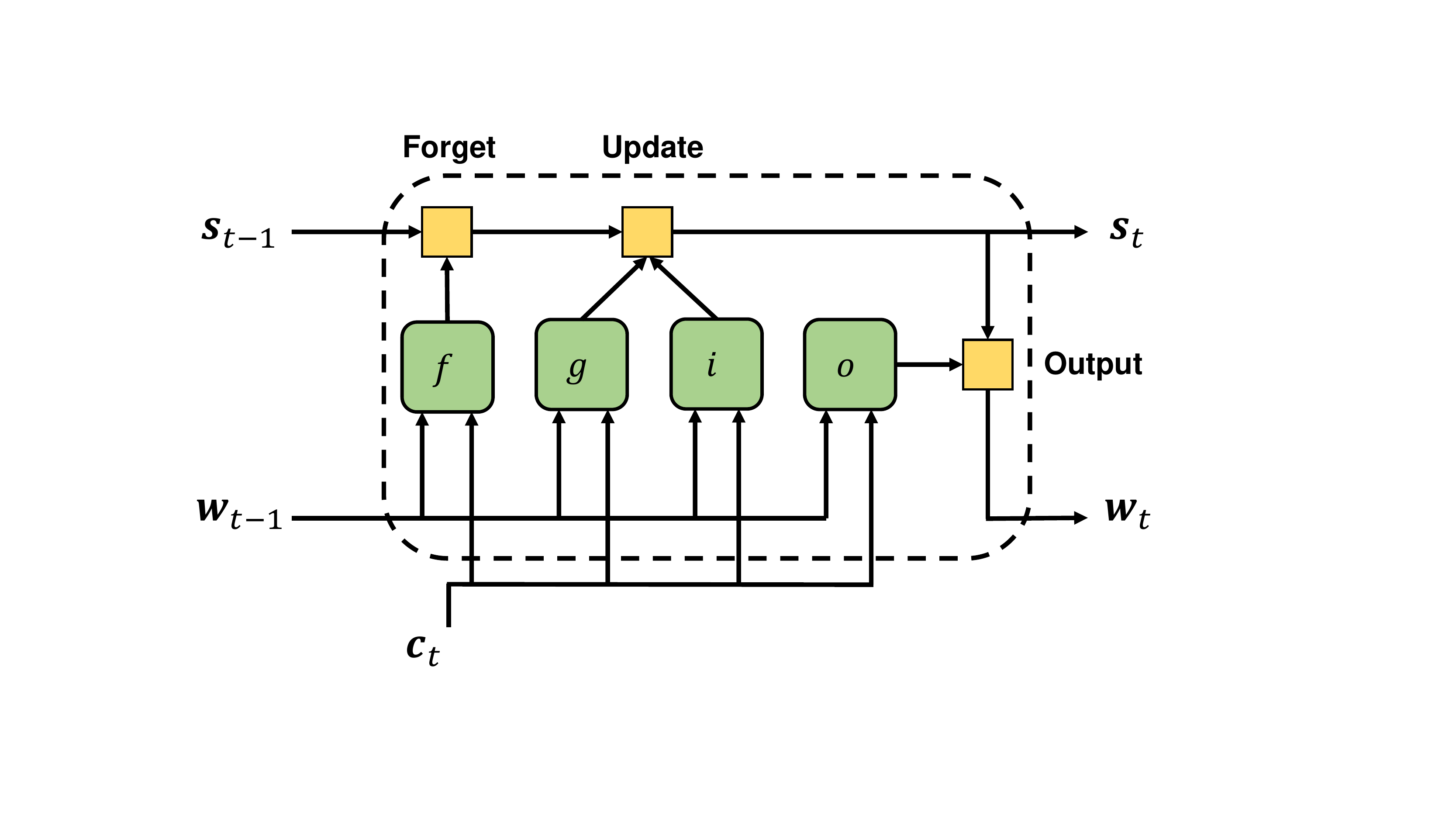}
	\caption{Flow of data at the LSTM layer~\cite{LSTM_MatLab}.}
	\label{fig:LSTMBlock}
\end{figure}

Cell state $\mathbf{s}_t$ represents the information learned from the previous time steps and output state $\mathbf{w}_t$ contains the output for the current time step. In the training process, the LSTM layer updates its weights by using gates. There are three types of weights in the LSTM layer including the input weights $\mathbf{W}$, the recurrent weights $\mathbf{R}$, and the bias $\mathbf{b}$. These matrices are concatenated  from the weights of each gates in the LSTM layers as follows:

\begin{equation}
\mathbf{W}=
\begin{bmatrix}
W_i\\
W_f\\
W_g\\
W_o
\end{bmatrix},
\mathbf{R}=
\begin{bmatrix}
R_i\\
R_f\\
R_g\\
R_o
\end{bmatrix},
\mathbf{b}=
\begin{bmatrix}
b_i\\
b_f\\
b_g\\
b_o
\end{bmatrix},
\end{equation} 
where $i$, $f$, $g$, $o$ represent the input gate, the forget gate, the cell candidate, and the output gate, respectively. In particular, the input state is used to decide what information we will update. At time step $t$, the input gate can be formulated as follows:
\begin{equation}
i_t = \phi_g(W_i\mathbf{c}_t + R_i\mathbf{w}_{t-1}+b_i),
\end{equation}
where $\phi_g$ is the gate activation function. In this work, we use the sigmoid function which is a common activation function in deep learning~\cite{DeepLearningBook}. $\phi_g$ is given by $\phi_g(x) = (1+e^{-x})^{-1}$. $\mathbf{c}_t$ is the input sequence at time step $t$. The forget state is used to decide what information we will remove from the cell state, i.e., remove the long-term dependencies. The forget state can be formulated as follows:
\begin{equation}
f_t = \phi_g(W_f\mathbf{c}_t + R_f\mathbf{w}_{t-1}+b_f).
\end{equation}
The cell candidate is used to add information to the cell state and is formulated as follows:
\begin{equation}
g_t = \phi_c(W_g\mathbf{c}_t + R_g\mathbf{w}_{t-1}+b_g),
\end{equation}
where $\phi_c$ is the state activation function. In this work, we use the hyperbolic tangent function (tanh)~\cite{DeepLearningBook}, which is commonly used to express the cell state of LSTM networks~\cite{LSTM_MatLab}, to derive the state activation function. Finally, the output gate is used control the output of the LSTM layer. This gate can be formulated as follows:
\begin{equation}
o_t = \phi_g(W_o\mathbf{c}_t + R_o\mathbf{w}_{t-1}+b_o).
\end{equation}

By using these gates, the LSTM can update the cell state and the hidden state (i.e., output). In particular, the cell state can be derived as follows:
\begin{equation}
\mathbf{s}_t = f_t \odot \mathbf{s}_{t-1} + i_t \odot g_t,
\end{equation}
where $\odot$ denotes the element-wise multiplication of vectors. Similarly, the hidden state can be expressed as follows:
\begin{equation}
\mathbf{w}_t = o_t \odot \phi_c(\mathbf{s}_t).
\end{equation}

By recurrently updating the cell state and the hidden state, the LSTM layer can learn useful information from the input data. We then use the fully connected hidden layer and the softmax layer between the LSTM and the output layer to transform the output of the LSTM layer to the form that the output layer can use to predict backscattered bits, i.e., bits ``0'' or bits ``1''. In particular, the fully connected hidden layer multiplies its input by a weight matrix and then adds a bias vector. In this work, the weights are initialized with the Glorot initializer~\cite{Glorot2010Xavier}, and the bias is initialized with zeros. Then, the softmax layer applies the softmax function~\cite{DeepLearningBook} to the output of the fully connected layer. Finally, the classification layer calculates the cross entropy loss for the output of the softmax layer. In this way, the receiver can predict the backscattered bits sent from the transmitter. In this paper, we assume that the receiver has sufficient computational power to perform signal processing operations with multiple antennas. It is worth noting that in cases if the receiver does not have sufficient computing resources, the proposed DL-based signal detector can be offloaded to the cloud server and then the trained model will be sent to the receiver to detect the backscattered signals. Moreover, transfer learning~\cite{TransferLearning} can be adopted to significantly reduce the structure of the deep neural network and the training time.

\section{Analytical and Simulation Results}
\label{sec:evaluation}
\subsection{Simulation Setup}

In this work, the Rayleigh fading follows the CSCG distribution with zero mean and unit variance~\cite{Zhang2019Constellation}. Note that the average signal-to-noise ratios of the direct links from the jammer $\alpha^\mathrm{jr}$ and from the transmitter $\alpha^\mathrm{tr}$ to the receiver greatly depend on the environment parameters such as path loss, distances between the jammer and the transmitter to the receiver, transmit power, and antenna gain. For the sake of simplicity, we set $\alpha^\mathrm{jr}$ at 7dB and $\alpha^\mathrm{tr}$ at $5$dB. Note that these parameters are then varied from 1dB to 10dB~\cite{10WJammer} to observe the system performance in different scenarios. Moreover, in ambient backscatter communications, the backscatter link is usually weak~\cite{Zhang2019Constellation}. Thus, we set $\tilde{\alpha}^\mathrm{j} = \tilde{\alpha}^\mathrm{t} = -15$dB. Moreover, to evaluate the performance of our framework in different settings, $M$ is varied from $1$ to $10$. Unless otherwise stated, we set $N=50$, $I=100$, and $P=20$. For the DL-based signal detector, the training set is generated by running $10^5$ Monte Carlo runs. After finishing the training process, the trained model is used to obtain the BER performance through $10^6$ Monte Carlo runs for reliable results. The deep neural network greatly affects the detection performance of the receiver, and thus it needs to be thoughtfully designed. With more layers, the deep neural network can achieve better detection accuracy but requires more time for the learning process. In this work, we design a lightweight deep neural network yet possesses a good detection performance for the receiver. In particular, we employ one sequence input layer, one LSTM layer, one fully connected layer, one softmax layer, and one classification layer as shown in Fig.~\ref{fig:LSTM}. The number of hidden units of the LSTM layer is $1000$. The fully connected layer has the output size of $2$ corresponding to two classes of output bits, i.e., bits ``0'' and bits ``1''. We use Adam optimizer to train the network as it has been demonstrated to be more effective than conventional optimizers~\cite{Adam}. The learning rate is set at $0.001$, which is a common value in literature. It is worth noting that under the attack of a super reactive jammer on the conventional (legitimate) channel, here we focus on evaluating the performance of the backscatter channel from the tag to the receiver.

In this paper, we simulated the system using Matlab. The deep learning model is constructed based on the Deep Learning Toolbox in Matlab. All channels experience Rayleigh fading with with zero mean and unit variance, and the Rayleigh fading generators are independently initialized. In the considered system, the ambient RF source signals (e.g., signals from the jammer and the transmitter in our work) are unknown at the tag and random (e.g., the transmitter actively transmits random/blank signals to attract the jammer and save energy). Thus, we do not need take the legitimate system's information, e.g., modulation, transmitted data, and packet size, into account. Instead, the transmitter's signal is assumed to follow  the standard CSCG distribution with zero mean and unit variance. The performance metrics used for evaluation are the maximum achievable backscatter rate and bit-error-rate. Specifically, the maximum achievable backscatter rate is obtained from Observation 1. This metric is used to evaluate the capacity of the system under jamming attacks. The bit-error-rate is calculated by the ratio between the number of error bits, i.e., not successfully detected by the receiver, and the total transmitted bits. This metric is used to evaluate the reliability of the proposed solution under the jammer attacks.

\subsection{Maximum Achievable Backscatter Rate}
\begin{figure}[!]
	\centering
	\includegraphics[scale=0.43]{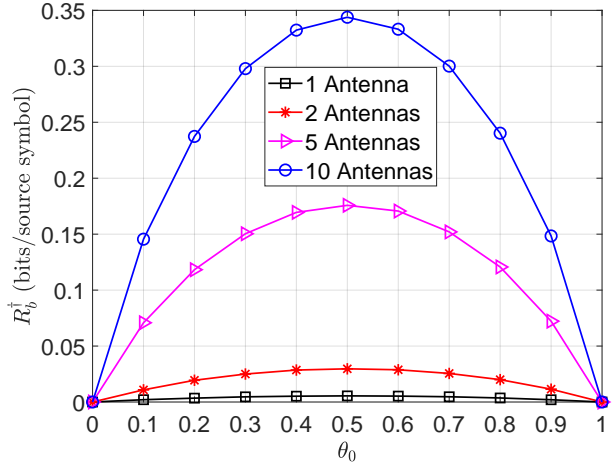}
	\caption{Maximum achievable backscatter rate vs. prior probability when backscattering bits ``$0$''.}
	\label{Fig.varyTheta0}
\end{figure}

We first vary the prior probability of backscattering bits ``$0$'' and observe the maximum achievable backscatter rate of the tag with different number of antennas at the receiver as illustrated in Fig.~\ref{Fig.varyTheta0}. In particular, the maximum achievable backscatter rate is obtained based on Observation~\ref{theo:maxrate} through $10^6$ Monte Carlo runs. It can be observed that the backscatter rate increases with the number of antennas at the receiver. The reason is that with multiple antennas, the receiver can leverage the antenna gain to eliminate the effects of the fading and the direct link interference. As a result, the backscattered signals receiver at the receiver can be enhanced under reactive jamming attacks. It is worth noting that when the probability of backscattering bits ``$0$'' equals 0.5, the backscatter rate is maximized. Thus, in this paper we set $\theta_0 = \theta_1 = 0.5$.

\begin{figure*}[h]
	\centering
	\begin{subfigure}[b]{0.45\textwidth}
		\centering
		\includegraphics[scale=0.4]{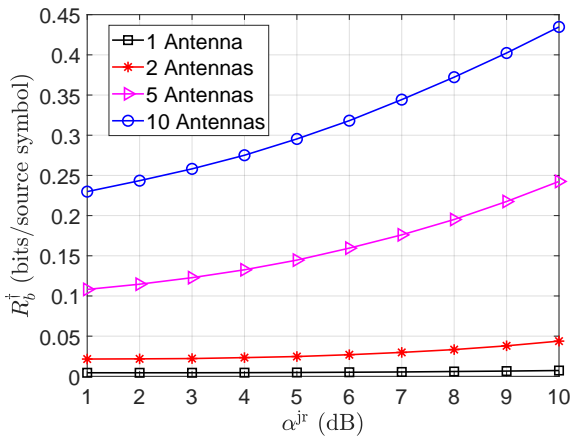}
		\caption{$\alpha^\mathrm{tr} = 5$ dB, $\tilde{\alpha}^\mathrm{j}=\tilde{\alpha}^\mathrm{t}= -15$ dB.}
	\end{subfigure}%
	~ 
	\begin{subfigure}[b]{0.45\textwidth}
		\centering
		\includegraphics[scale=0.4]{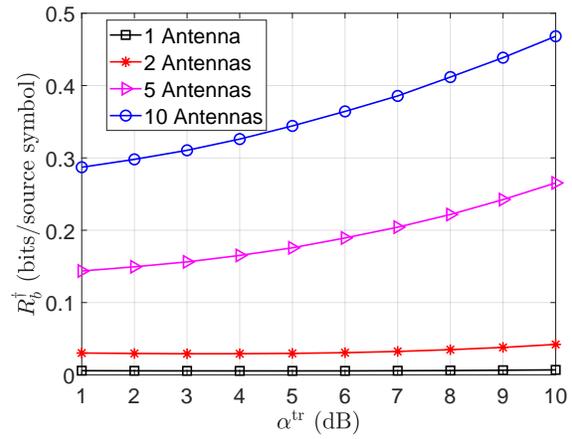}
		\caption{$\alpha^\mathrm{jr} = 7$ dB, $\tilde{\alpha}^\mathrm{j}=\tilde{\alpha}^\mathrm{t}= -15$ dB.}
	\end{subfigure}%
	\caption{Maximum average achievable backscatter rate vs. the average SNRs of the direct link from (a) the jammer and (b) the transmitter.} 
	\label{Fig.PattackTransmitDirectLink}
\end{figure*}

Next, we vary the average SNRs of the direct link from the jammer and from the transmitter to the receiver and evaluate the maximum achievable backscatter rate in various settings as shown in Fig.~\ref{Fig.PattackTransmitDirectLink}(a) and Fig.~\ref{Fig.PattackTransmitDirectLink}(b), respectively. Specifically, in Fig.~\ref{Fig.PattackTransmitDirectLink}(a), the SNR of the direct link from the jammer is varied from $1$ dB to $10$ dB while fixing the SNR of the direct from the transmitter at $5$ dB. It can be observed that the backscatter rate increases with the SNR of the direct link from the jammer. In other words, with our proposed approach, the system can improve the average throughput when the jammer attacks the channel at higher power levels. This is an important finding as our proposed solution can effectively deal with jamming attacks. As we analyzed in Sections~\ref{sec:relatedwork} and~\ref{Sec.System}, conventional methods cannot deal with the considered jamming attacks as the super-reactive jammer can simultaneously attack and listen to the activities of the transmitter. It is worth noting that the higher number of antennas at the receiver, the higher backscatter rate we can achieve. This is due to the fact that the receiver can effectively deal with the direct link interference from the transmitter and the jammer by using multiple antennas. Similarly, in Fig.~\ref{Fig.PattackTransmitDirectLink}(b), we fix the SNR of the direct link from the jammer at $7$ dB and vary the SNR of the direct link from the transmitter from $1$ dB to $10$ dB. It can be observed that, when the transmitter transmits at higher power levels, the tag can backscatter more information to the receiver. Thus, to achieve better performance under jamming attacks, one feasible solution is increasing the transmit power of the transmitter. Note that with two antennas implemented at the receiver, the backscatter rate is very low due to the fading and direct link interference. Again, these negative effects can be mitigated by using more antennas at the receiver.

\begin{figure*}[h]
	\centering
	\begin{subfigure}[b]{0.45\textwidth}
		\centering
		\includegraphics[scale=0.4]{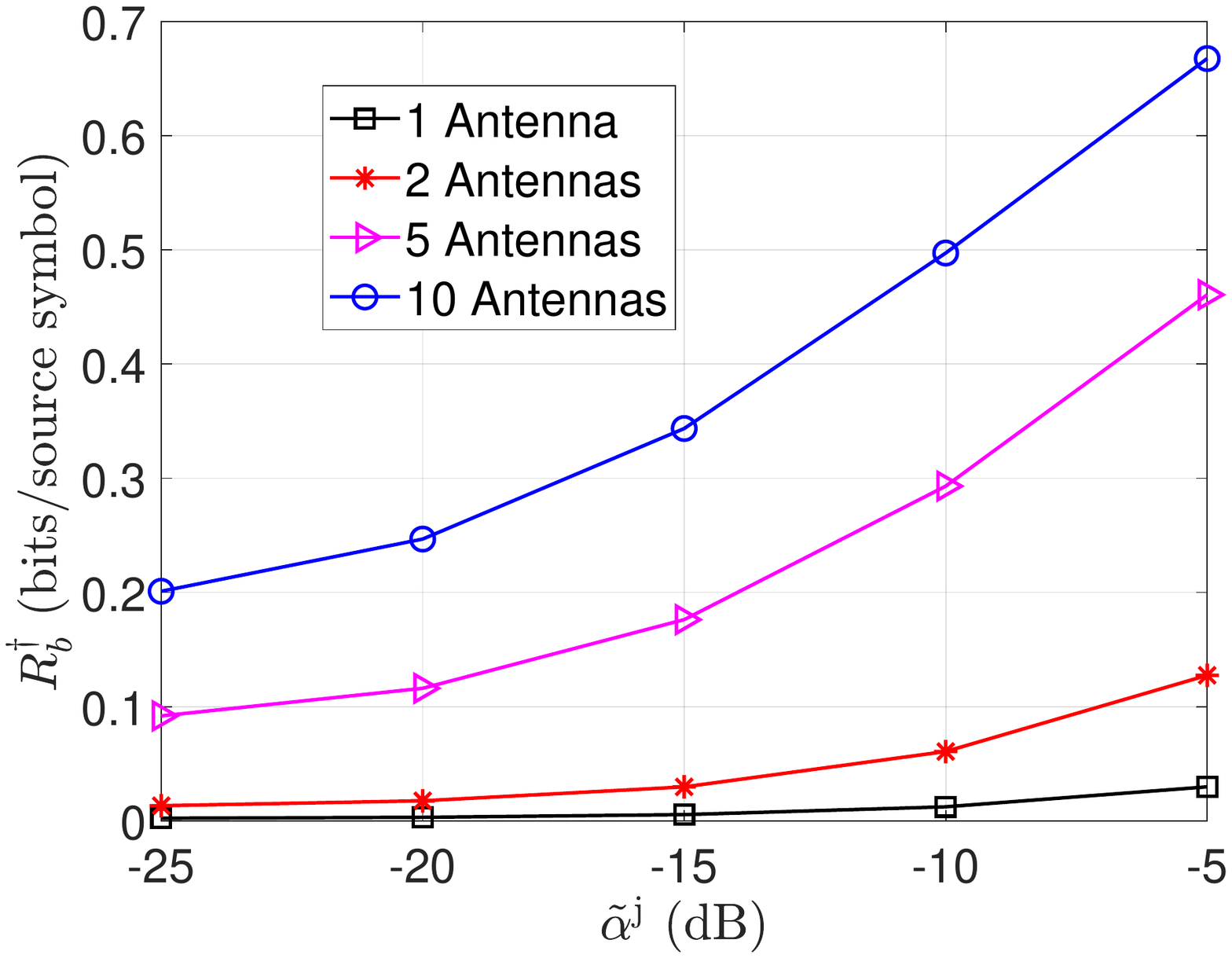}
		\caption{$\alpha^\mathrm{tr} = 5$ dB, $\alpha^\mathrm{jr} = 7$ dB, $\tilde{\alpha}^\mathrm{t}= -15$ dB.}
	\end{subfigure}%
	~ 
	\begin{subfigure}[b]{0.45\textwidth}
		\centering
		\includegraphics[scale=0.4]{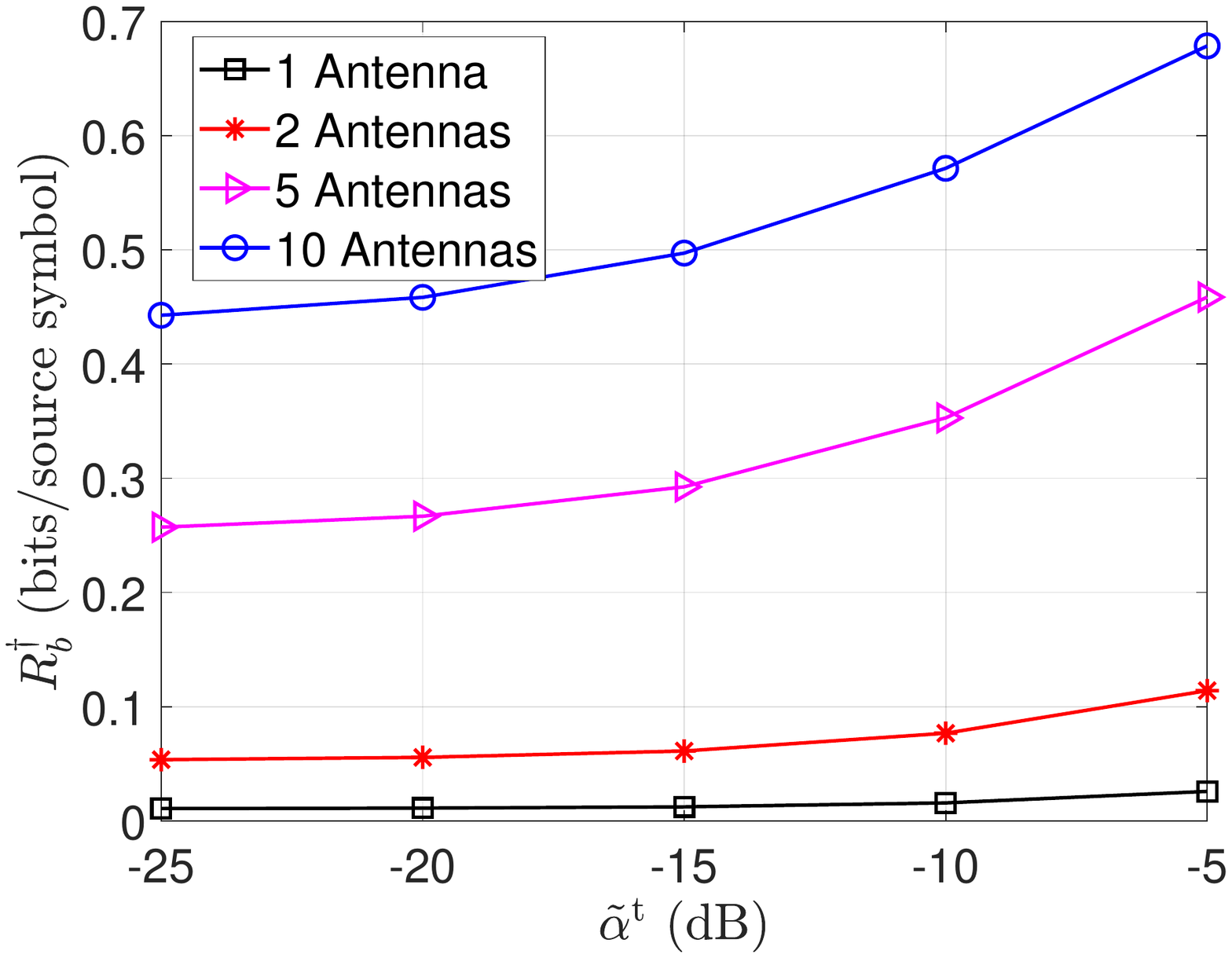}
		\caption{$\alpha^\mathrm{tr} = 5$ dB, $\alpha^\mathrm{jr} = 7$ dB, $\tilde{\alpha}^\mathrm{j}=-10$ dB.}
	\end{subfigure}%
	\caption{Maximum average achievable backscatter rate vs. the SNRs of the backscatter links from (a) the jammer and (b) the transmitter.} 
	\label{Fig.backscatter}
\end{figure*}

Next, we vary the relative SNRs (corresponding to the reflection coefficient, antenna gain, path loss, and tag-to-receiver distance as expressed in (\ref{eq:receSignal})) of the backscatter links from the jammer and the transmitter and observe the maximum achievable backscatter rate of the tag as shown in Fig.~\ref{Fig.backscatter}(a) and Fig.~\ref{Fig.backscatter}(b), respectively. It can be observed that, when the relative SNRs increases, the maximum achievable backscatter rate of the tag also increases. This is because with higher SNRs, the backscattered signals received at the receiver are improved, and thus the receiver can efficiently extract the original information. Again, the receiver can deal with the fading and direct link interference from the transmitter and the jammer to increase the backscatter rate by employing more antennas. Note that in above results, we theoretically obtain the maximum achievable backscatter rate of the backscatter tag which does not require to detect backscattered bits at the receiver. Thus, deep learning is not applied. In the following, we will demonstrate the effectiveness of the DL-based signal detector in terms of BER.

\begin{figure*}[h]
	\centering
	\begin{subfigure}[b]{0.45\textwidth}
		\centering
		\includegraphics[scale=0.4]{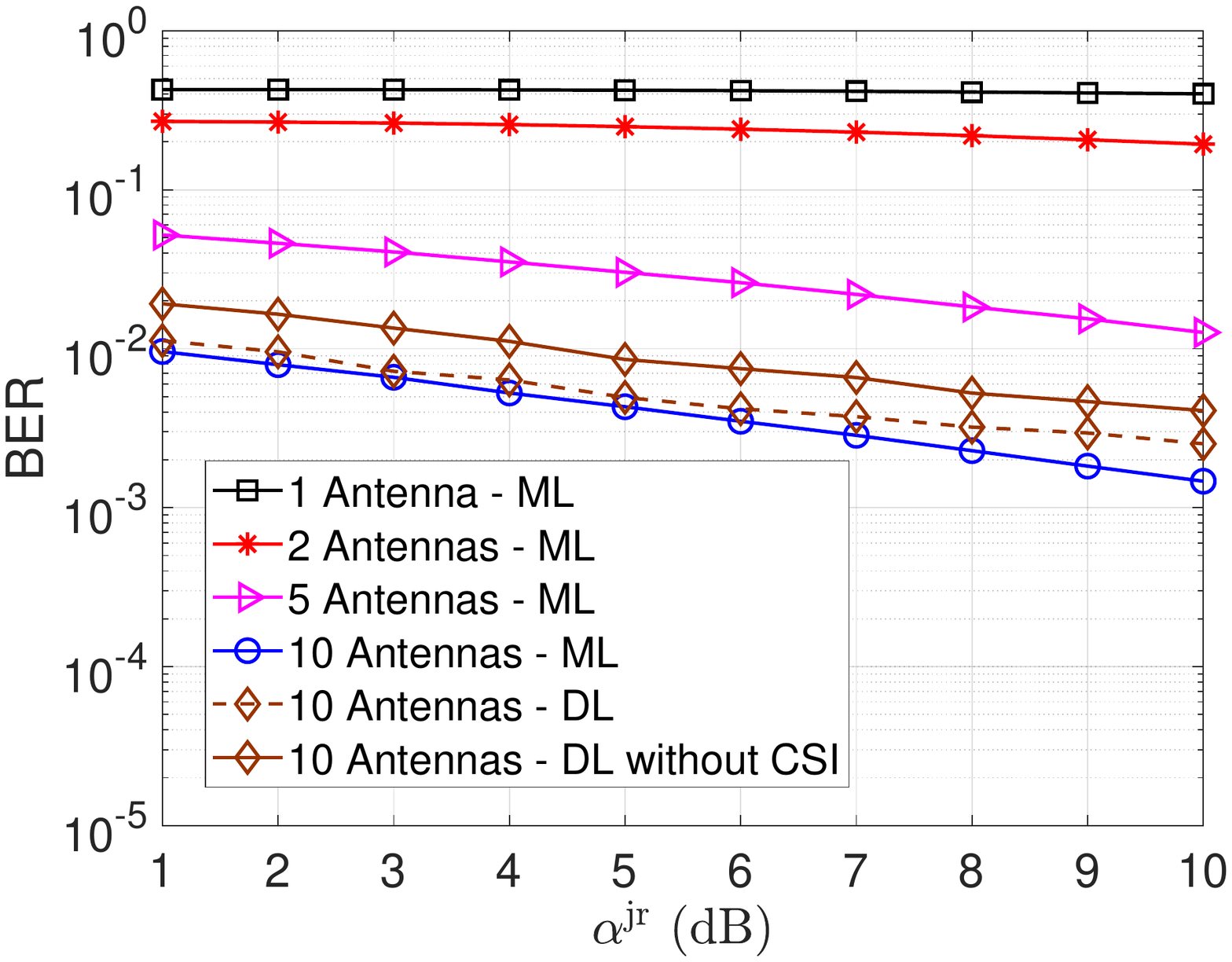}
		\caption{$\alpha^\mathrm{tr} = 5$ dB, $\tilde{\alpha}^\mathrm{j}=\tilde{\alpha}^\mathrm{t}= -15$ dB.}
	\end{subfigure}%
	~ 
	\begin{subfigure}[b]{0.45\textwidth}
		\centering
		\includegraphics[scale=0.4]{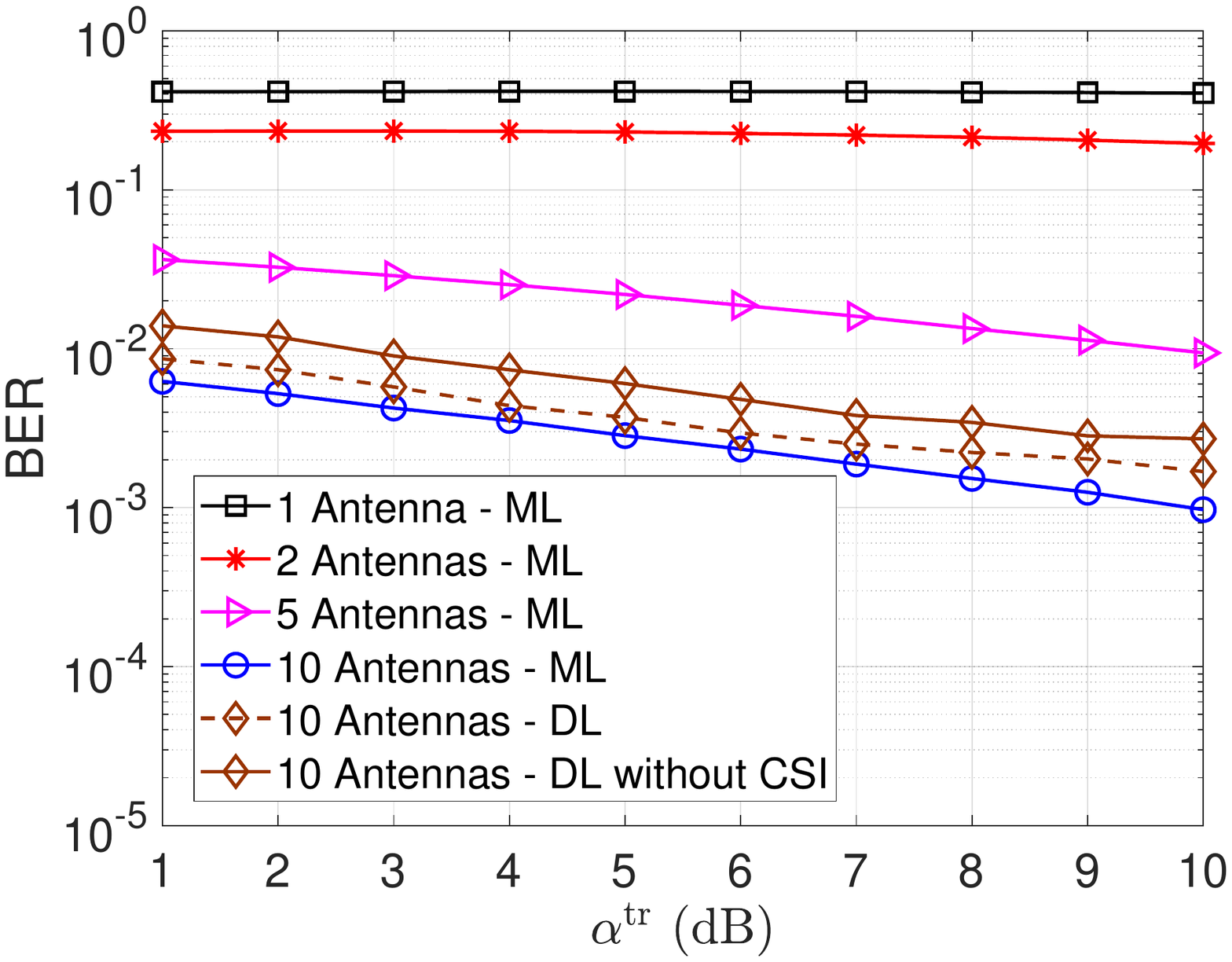}
		\caption{$\alpha^\mathrm{jr} = 7$ dB, $\tilde{\alpha}^\mathrm{j}=\tilde{\alpha}^\mathrm{t}= -15$ dB.}
	\end{subfigure}%
	\caption{BER vs. the average SNRs of the direct link from (a) the jammer and (b) the transmitter.}
	\label{Fig.BERDirectLinks}
\end{figure*}

\subsection{BER Performance}

In this section, the BER of the system is evaluated in different scenarios. We first vary the average SNR of the direct link from the jammer and from the transmitter to the receiver as shown in Fig.~\ref{Fig.BERDirectLinks}(a) and Fig.~\ref{Fig.BERDirectLinks}(b), respectively. In particular, it can be observed from Fig.~\ref{Fig.BERDirectLinks}(a) that the BER reduces when the SNR of the direct link from the jammer to the receiver $\alpha^\mathrm{jr}$ increases. The reason is that with stronger jamming signals, the backscatter tag can backscatter information more effectively, resulting in lower BERs. The similar trend can be observed when increasing the SNR of the direct link from the transmitter to the receiver $\alpha^\mathrm{tr}$ as shown in Fig.~\ref{Fig.BERDirectLinks}(b). As a result, with our proposed anti-jamming mechanism, the higher power the jammer uses to attack the channel, the better BER performance we can achieve. In addition, increasing the transmit power of the transmitter can also improve the BER performance. More importantly, our proposed DL-based detector can achieve the BER performance close to that of the ML detector without requiring the model of the current propagation environment and the channel state information. Again, the BER reduces when we employ more antennas at the receiver. Note that when the SNR increases, the gap between the BERs obtained by our proposed DL-based detector and the optimal ML detector increases. This is because with higher SNRs, the optimal ML detector can detect the backscattered bits based on the likelihood functions of the received signals more efficiently. However, our proposed detector can always achieve the BER close to that of the optimal ML detector without requiring complicated mathematical models and distributions. This trend is the same for the cases with 1, 2 and 5 antennas. However, the DL curves for these cases are not displayed here (for better presentation).

\begin{figure*}[h]
	\centering
	\begin{subfigure}[b]{0.45\textwidth}
		\centering
		\includegraphics[scale=0.4]{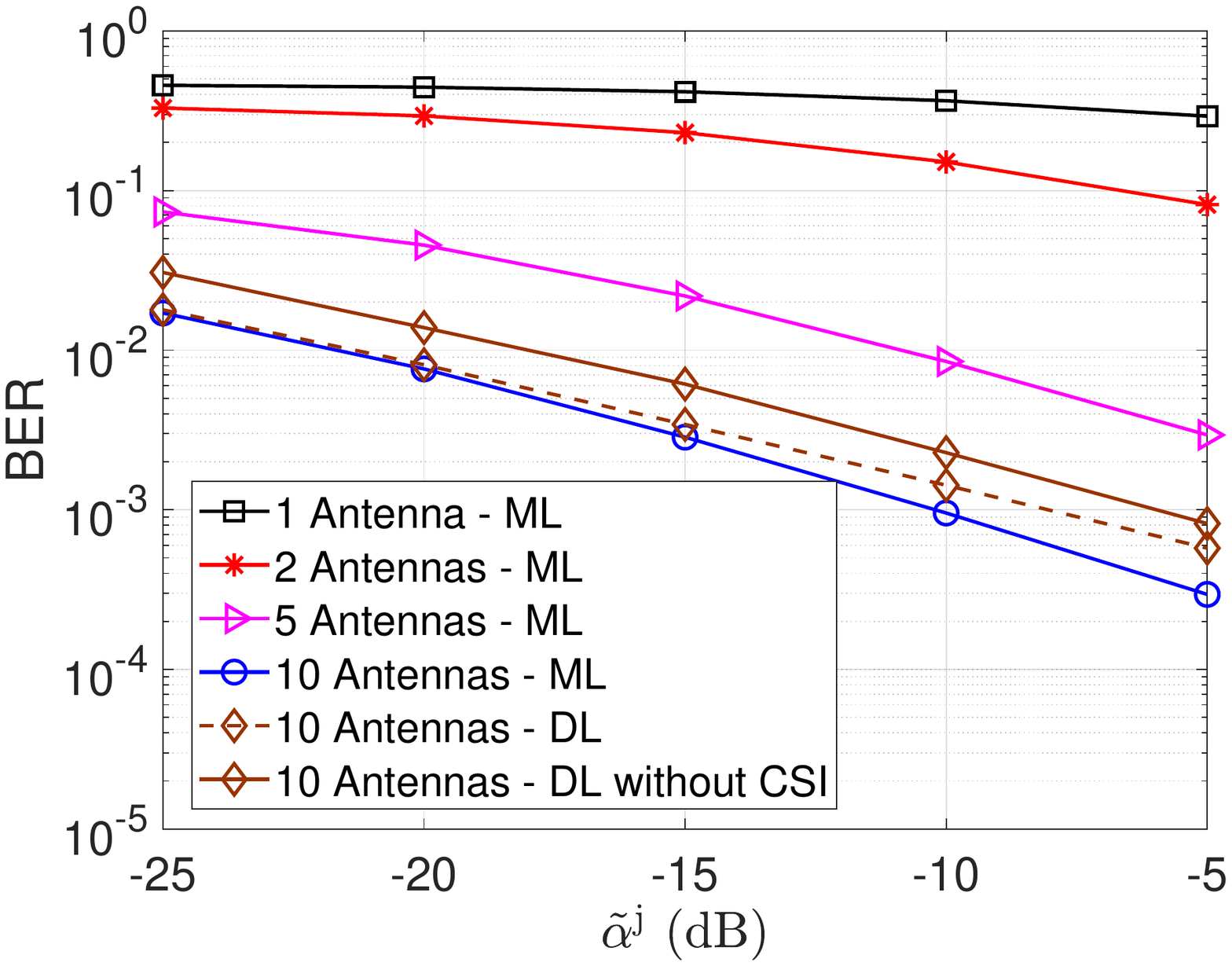}
		\caption{$\alpha^\mathrm{tr} = 5$ dB, $\alpha^\mathrm{jr} = 7$ dB, $\tilde{\alpha}^\mathrm{t}= -15$ dB.}
	\end{subfigure}%
	~
	\begin{subfigure}[b]{0.45\textwidth}
		\centering
		\includegraphics[scale=0.4]{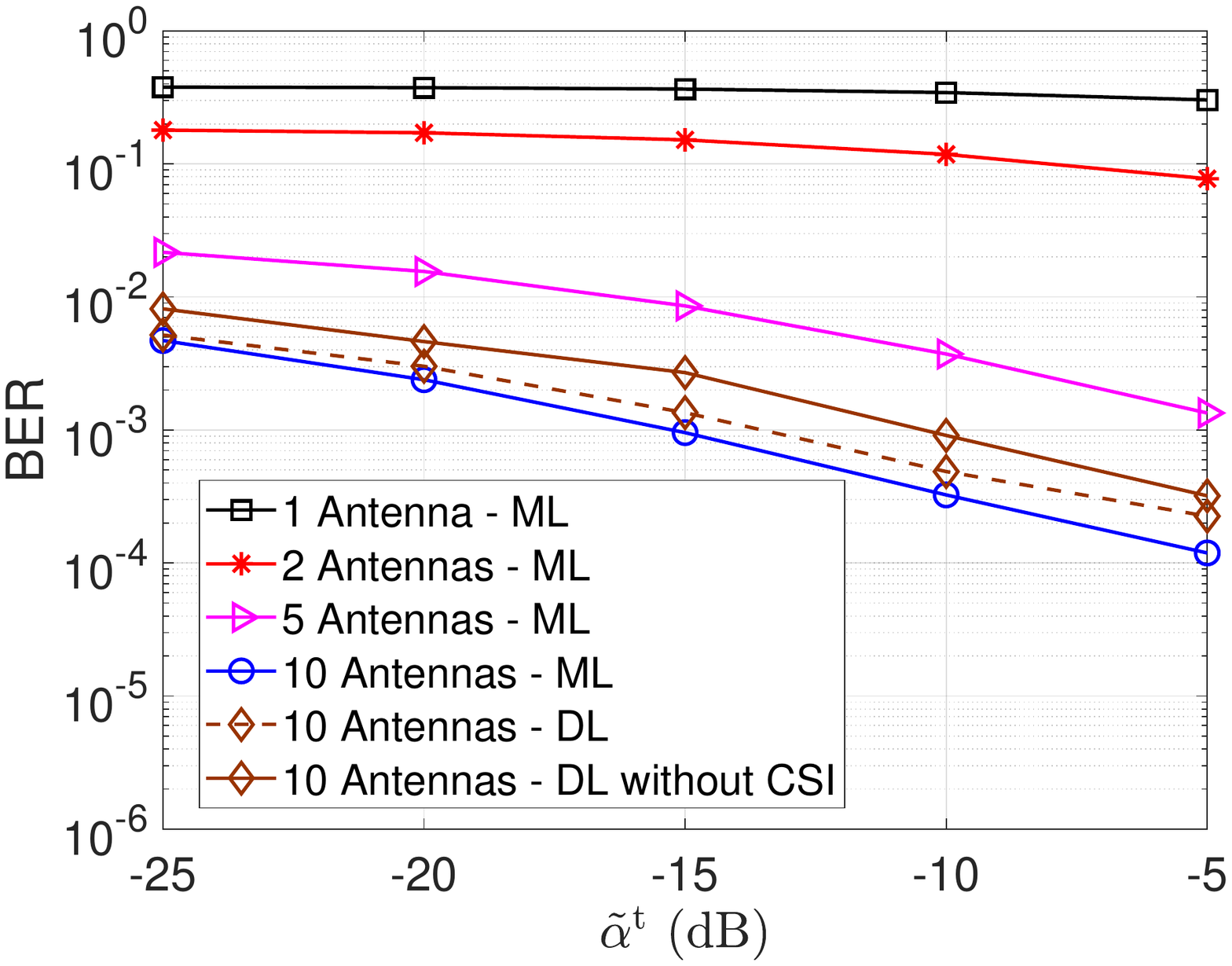}
		\caption{$\alpha^\mathrm{tr} = 5$ dB, $\alpha^\mathrm{jr} = 7$ dB, $\tilde{\alpha}^\mathrm{j}=-10$ dB.}
	\end{subfigure}%
	\caption{BER vs. the SNRs of the backscatter links from (a) the jammer and (b) the transmitter.}
	\label{Fig.BERBackscatterLinks}
\end{figure*}

Next, the SNRs of the backscatter links (i.e., jammer-tag-receiver and transmitter-tag-receiver) are varied to evaluate the BER performance as shown in Fig.~\ref{Fig.BERBackscatterLinks}(a) and Fig.~\ref{Fig.BERBackscatterLinks}(b), respectively. In particular, when $\tilde{\alpha}^\mathrm{j}$ and $\tilde{\alpha}^\mathrm{t}$ increase, the BER is reduced. This is stemmed from the fact that when the SNRs of the backscatter links increase, the received backscattered signals are stronger, and thus the backscatter communications are more reliable. Again, the more antennas we have at the receiver, the better BER performance we can achieve. It is worth noting that our proposed DL-based detector achieves the BER performance close to that of the ML detector.

\begin{figure*}[h]
	\centering
	\begin{subfigure}[b]{0.45\textwidth}
		\centering
		\includegraphics[scale=0.4]{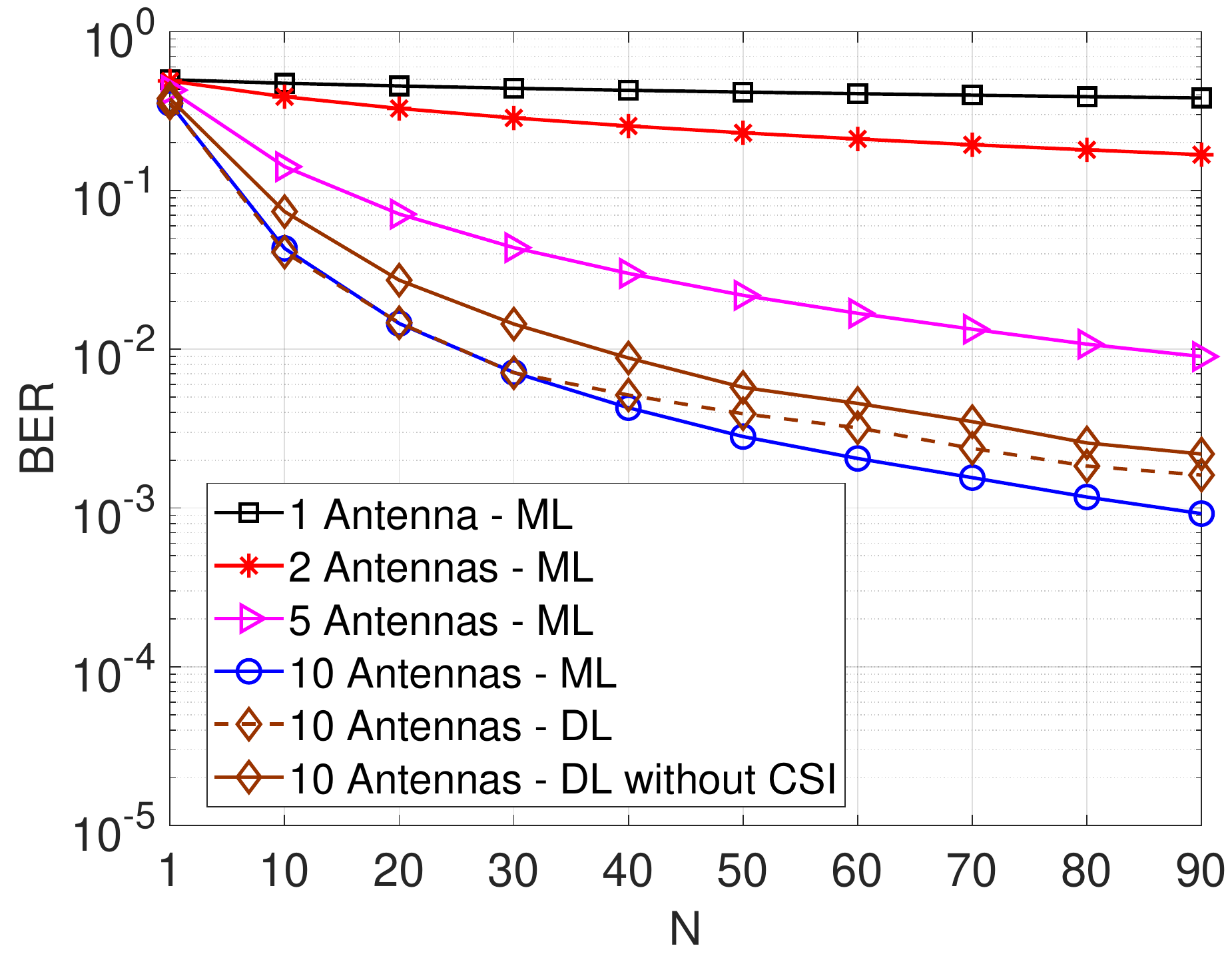}
		\caption{$\alpha^\mathrm{jr} = 7$ dB, $\alpha^\mathrm{tr} = 5$ dB, $\tilde{\alpha}^\mathrm{j}=\tilde{\alpha}^\mathrm{t}= -15$ dB}
	\end{subfigure}%
	~
	\begin{subfigure}[b]{0.45\textwidth}
		\centering
		\includegraphics[scale=0.4]{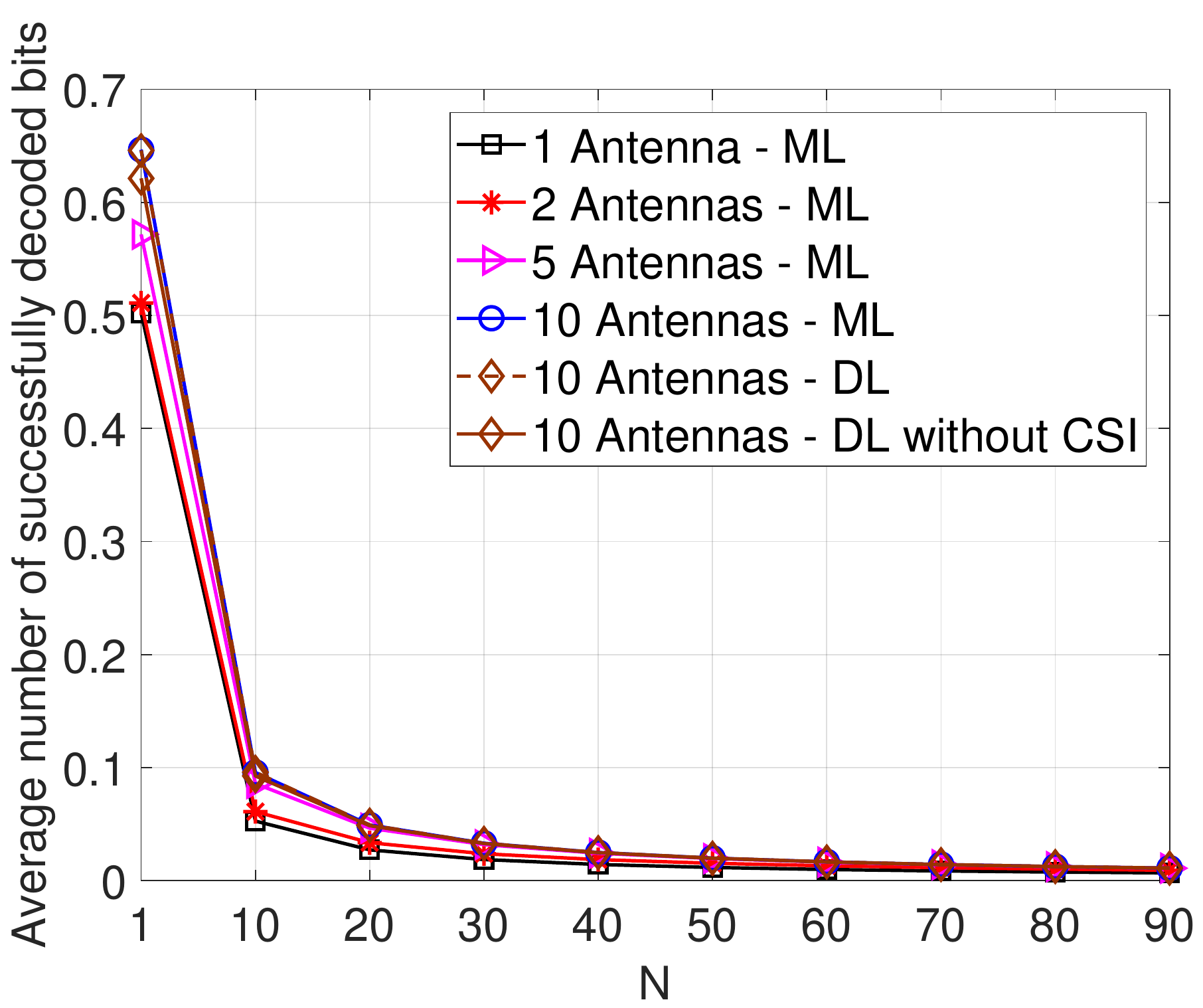}
		\caption{$\alpha^\mathrm{jr} = 7$ dB, $\alpha^\mathrm{tr} = 5$ dB, $\tilde{\alpha}^\mathrm{j}=\tilde{\alpha}^\mathrm{t}= -15$ dB}
	\end{subfigure}%
	\caption{(a) BER and (b) Average number of successfully detected bits vs. $N$.}
	\label{Fig.BERN}
\end{figure*}

Finally, we vary the spreading factor $N$ to evaluate the BER performance of the system. Recall that the backscatter rate should be lower than the data rate of the RF sources to help the receiver decode the backscattered signals more efficiently~\cite{Liu2013Ambient}. Intuitively, if the we backscatter one information symbol/bit per $N$ ambient signal symbols, then the larger $N$ (or the lower backscatter rate), the more reliable decoding of the information bit/symbol can be achieved at the backscattering receiver. It can be observed from Fig.~\ref{Fig.BERN}(a) that the BER reduces when $N$ is increased. In Fig.~\ref{Fig.BERN}(b), we illustrate the average number of successfully detected bits when N is varied from 1 to 100. In particular, the average number of successfully detected bits is the number of bits that are successfully detected at the receiver during each RF source symbol. Clearly, a higher value of $N$ leads to a lower number of successfully detected bits which is the number of bits that are successfully detected at the receiver during each RF source symbol. The trade-off between $N$ and the backscatter rate depends on the system performance objective. From the figure, we can observe that a higher value of $N$ results in a lower backscatter rate but lower BER. As such, the optimal trade-off depends on the system requirements. For example, under our simulation settings, N should be less than 50 if one requires the average number of backscattered bits successfully received at the receiver to be higher than 0.02.

\section{Conclusion}
\label{sec:conclusion}
In this paper, we have proposed the state-of-the-art framework to deal with the super-reactive jammer that has virtually unlimited power and can simultaneously attack the channel and detect activity of the transmitter on the channel. In particular, we have first introduced the novel deception mechanism that allows the transmitter to attract the jammer and leverage the jamming signals to transmit information to the receiver by using the ambient backscatter tag. To improve the BER performance, we have proposed to use multiple antennas at the receiver. Then, the maximum likelihood detector is developed to detect the backscattered bits at the receiver. Although achieving the optimal performance, the ML solution imposes a high computation complexity and is only applicable to a specific channel distribution. Thus, we have proposed the novel deep learning-based detector with the latest advances in Long Short Team Memory networks, which can efficiently deal with weak backscattered signals under super-reactive jamming attacks. Through the simulation results and theoretical analysis, we have shown that under our proposed framework, the more power the jammer uses to attack the channel, the better performance in terms of BER and throughput the system can achieve. In addition, the proposed deep learning-based signal detector can achieve the BER performance close to that of the optimal ML detector.

\appendices
\section{The proof of Observation~\ref{theo:maxrate}}
\label{appendix:maximumbackscatterrate}
Similar to~\cite{Guo2019Nocoherent}, in the following, we will show how to obtain the maximum achievable backscatter rate of the backscatter tag. The mutual information $I(e, \mathbf{y})$ can be expressed as follows:
	\begin{equation}
		I(e, \mathbf{y}) = H(e) - H(e|\mathbf{y}) = H_\mathrm{b}(\theta_0) - \mathbb{E}_{\{\mathbf{y}_0\}}[H(e|\mathbf{y}_0)],
	\end{equation}
	where $H(e)$ is the entropy of $e$, $\mathbf{y}_0$ is one realization of $\mathbf{y}$, $H(e|\mathbf{y})$ and $H(e|\mathbf{y}_0)$ are the conditional entropy of $e$ given $\mathbf{y}$ and $\mathbf{y}_0$, respectively. $\theta_0$ is the prior probability of backscattering bit ``0''. $H_\mathrm{b}(\theta_0)$ denotes the binary entropy function of $\theta_0$ and can be expressed as follows:
	\begin{equation}
		\label{eq:Ctheta0}
		H_\mathrm{b}(\theta_0) \triangleq -\theta_0 \log_2 \theta_0 -\theta_1 \log_2 \theta_1,
	\end{equation}
	where $\theta_1$ is the prior probability of backscattering bit ``1'', and $\theta_1 = 1 - \theta_0$.

As $H_\mathrm{b}(\theta_0)$ is independent of all the channel coefficients, $R_\text{b}^\dagger$ can be rewritten as
	\begin{equation}
		\label{eq:max_rate}
		R_\text{b}^\dagger = \max\limits_{\theta_0} \Big(H_\mathrm{b}(\theta_0) - \mathbb{E}_{\{\mathbf{y}_0\}}[H(e|\mathbf{y}_0)]\Big).
	\end{equation}
	It is worth noting that $H(e|\mathbf{y}_0)$ is averaged with respect to $\mathbf{y}_0$ given different prior distributions. In the following, we transform $H(e|\mathbf{y}_0)$ into a binary entropy function. First, we denote $p(e=j|\mathbf{y}_0)$ as the posterior probability of receiving bit $j \in \{0,1\}$ given $\mathbf{y}_0$ as follows:
	\begin{equation}
		p(e=j|\mathbf{y}_0) = \frac{\theta_jp(\mathbf{y}|e=j)}{\theta_0p(\mathbf{y}_0|e=0) +\theta_1 p(\mathbf{y}_0|e=1)},
	\end{equation}
	where $p(\mathbf{y}_0|e=0)$ and $p(\mathbf{y}_0|e=1)$ are the conditional probability density functions corresponding to $e=0$ and $e=1$ as expressed in (28). We then define $\omega_j = p(e=j|\mathbf{y}_0)$ with $j \in \{0,1\}$. Given the above, we can derive $H(e|\mathbf{y}_0)$ as follows:
	\begin{equation}
		\label{eq:H_b}
		H(e|\mathbf{y}_0) = -\sum_{j=0}^{1}\omega_j\log_2\omega_j = H_\mathrm{b}(\omega_0).
	\end{equation}
	Substituting (\ref{eq:H_b}) into (\ref{eq:max_rate}), the maximum achievable backscatter rate can be expressed as follows:
	\begin{equation}
		\label{eq:maximum_backscatter_rate}
		R_\text{b}^\dagger = \max\limits_{\theta_0} \Big(H_\mathrm{b}(\theta_0) - \mathbb{E}_{\{\mathbf{y}_0\}}[H_\mathrm{b}(\omega_0)]\Big) = \max\limits_{\theta_0} \Big( H_\mathrm{b}(\theta_0) - \int_{\mathbf{y}_0}(\theta_0p(\mathbf{y}_0|e=0) +\theta_1 p(\mathbf{y}_0|e=1))H_\mathrm{b}(\omega_0)d\mathbf{y}_0\Big).
\end{equation}

\end{document}